\newcommand{\F}{\mathcal{F}}
\newcommand{\Fr}{\mathcal{F}_r}
\def\eqref#1{Equation (\ref{#1})}
\def\1{\bm{1}}
\DeclareMathAlphabet{\mathsfit}{\encodingdefault}{\sfdefault}{m}{sl}
\SetMathAlphabet{\mathsfit}{bold}{\encodingdefault}{\sfdefault}{bx}{n}
\newcommand{\E}{\mathbb{E}}
\newcommand{\R}{\mathbb{R}}
\newcommand{\Var}{\mathrm{Var}}
\DeclareMathOperator*{\argmin}{arg\,min}
\theoremstyle{definition}
\newtheorem{theorem}{Theorem}[section]
\newtheorem{assumption}[theorem]{Assumption}
\newtheorem{proposition}[theorem]{Proposition}
\newtheorem{lemma}[theorem]{Lemma}
\newtheorem{remark}[theorem]{Remark}
\def\munderbar#1{\underline{\sbox\tw@{$#1$}\dp\tw@\z@\box\tw@}}
\newcommand{\be}{\begin{equation}}
\newcommand{\ee}{\end{equation}}
\newcommand{\mc}{\mathcal}
\newcommand{\inner}[2]{\big \langle #1, #2 \big \rangle }
\DeclarePairedDelimiter{\smallnorm}{\lVert}{\rVert}
\newcommand{\smallinner}[2]{\langle #1, \ #2 \rangle}
\newcommand{\one}{\textbf{1}}
\newcommand{\td}{\widetilde}
\newcommand{\norm}[1]{\left\|#1\right\|}
\newcommand{\estimand}{\psi}
\newcommand{\ignore}[1]{}
\title{A Stable and Efficient Covariate-Balancing Estimator for Causal Survival Effects}
\newcommand{\myfnsymbol}[1]{%
  \expandafter\@myfnsymbol\csname c@#1\endcsname
}
\newcommand{\@myfnsymbol}[1]{%
  \ifcase #1
    % 0
  \or 1% 1
  \or 2% 2
  \or 3
  \or 4
  \fi
}
\newcommand{\affiliationA}{\@myfnsymbol{1}}
\newcommand{\affiliationB}{\@myfnsymbol{2}}
\newcommand{\affiliationC}{\@myfnsymbol{3}}
\newcommand{\affiliationD}{\@myfnsymbol{4}}
\author{
  Khiem Pham \textsuperscript{\affiliationA},
  David A. Hirshberg\textsuperscript{\affiliationB},
  Phuong-Mai Huynh-Pham\textsuperscript{\affiliationA},\\
  Michele Santacatterina\textsuperscript{\affiliationC},
  Ser-Nam Lim\textsuperscript{\affiliationD},
  Ramin Zabih\textsuperscript{\affiliationA}
}
\begin{document}
\maketitle
\footnotetext[1]{Cornell University}%
\footnotetext[2]{Emory University}%
\footnotetext[3]{New York University}%
\footnotetext[4]{University of Central Florida}%

\begin{abstract}

We propose an empirically stable and asymptotically efficient covariate-balancing approach to the problem of estimating survival causal effects in data with conditionally-independent censoring. This addresses a challenge often encountered in state-of-the-art nonparametric methods: the use of inverses of small estimated probabilities and the resulting amplification of estimation error. We validate our theoretical results in experiments on synthetic and semi-synthetic data.

%small errors in the estimated probabilities can result in huge changes in their inverses and therefore the resulting estimator. 

%However, despite their appealing theoretical properties, these estimators tend to be unstable because the debiasing step involves the use of the inverses of small estimated probabilities—small errors in the estimated probabilities can result in huge changes in their inverses and therefore the resulting estimator. This problem is exacerbated in survival settings where probabilities are a product of treatment assignment and censoring probabilities. We propose a covariate balancing approach to estimating these inverses directly, sidestepping this problem. The result is an estimator that is stable in practice and enjoys many of the same theoretical properties. In particular, under overlap and asymptotic equicontinuity conditions, our estimator is asymptotically normal with negligible bias and optimal variance. Our experiments on synthetic and semi-synthetic data demonstrate that our method has competitive bias and smaller variance than debiased machine learning approaches.
\end{abstract}

\section{Introduction and Related Work}
%\khiem{introduction (add intro plot here, terse description of data more detail in exp), setting (stastical, causal), approach(overlap challenge (add motivation plot here), AMLE, method), related work (refernces, OR and IPW, EIF), Theory and Implementation ?(asymptotic analysis, RKHS, survival estimation, bla), Experiment results.}

Estimating the impact of interventions on survival times is a key objective in numerous studies, spanning domains such as drug efficacy's evaluation in terms of ICU stay duration and the assessment of advertising campaigns' effects on customer dwell time. To measure this causal effect, the key survival object is generally the \textit{counterfactual survival curve} \citep{westling2023inference}, which represents the probability an
an individual in our population would experience an event after a specific point in time if, possibly contrary to fact, they had undergone a specific intervention. 
In this paper, we focus specifically on discrete time to events \citep{stitelman2011targeted, curth2021survite}.

Traditional approaches for estimating counterfactual survival curves have relied on parametric or semiparametric regression models, including marginal structural models \citep{kleinbaum2012parametric,cox1972regression, tsiatis2006semiparametric,yiu2022joint}. These models, however, struggle to capture complex relationships within the data. Recently, it has become popular to use machine learning methods to capture these complex covariate-outcome relationships \citep[e.g][]{ishwaran2008random, zhu2016deep,katzman2018deepsurv,ching2018cox,hu2021transformer, rindt2022survival}.

These methods can suffer from bias due to regularization. However, their use in combination with bias corrections arising from semiparametric/nonparametric efficiency theory can substantially ameliorate this problem \citep{bickel1993efficient,chernozhukov2018double, zheng2010asymptotic}. These corrections, in this context, are based on inverse probability weighting. In particular, in survival settings, it is necessary to weight observations by the inverse of the product of the conditional probability of treatment assignment and censoring \citep{robins1992recovery}. This poses a challenge for research that follows this approach, e.g.  \citet{westling2023inference} and \citet{cui2023estimating}, because small errors in the estimation of small probabilities can lead to large errors in the estimation of their inverses.

In contrast, we generalize techniques from the covariate balancing literature to perform this weighting adjustment \citep{imai2014covariate, hainmueller2012entropy, zubizarreta2015stable,li2018balancing,ben2021balancing, wong2017kernel,kallus2022optimal,hirshberg2021augmented}. Our proposed estimator is stable in small sample sizes, and is asymptotically efficient. That is, it is approximately normal with negligible bias, justifying the use of standard $\hat\mu \pm 2 \hat\sigma$ confidence intervals, and moreover has optimal variance among all such estimators. While analogous results have been established for cross-sectional data, the generalization to survival data is nontrivial, as adjusting for censoring in this setting involves working with time-varying conditioning structure.
%of the concepts and techniques to survival data is nontrivial
%. In particular, the counterfactual survival curve depends nonlinearly on the hazard function, the survival analog of the regression function that's at the heart of the discussion of covariate balance. And while this is, as usual, dealt with via linearization, this results  involving a notion of balance with time-varying conditioning structure. Within the balancing tradition, 
Existing work in this direction is limited. \citet{xue2023rkhs} proposed an approach that adjusts only for selection into treatment, assuming independent censoring and \citet{kallus2021optimal} proposed an approach based on marginal structural models. Furthermore, \citet{kallus2021optimal} provided no theoretical guarantees and \citet{xue2023rkhs} established only a rate of convergence. 

In the literature on automatic debiasing \citep[e.g.][]{chernozhukov2022automatic}, 
approaches that directly estimate debiasing weights for longitudinal problems, 
like the aforementioned product of inverse assignment and censoring probabilities,
have been proposed \citep[e.g.][]{chernozhukov2022automatic}. Our framing and analysis of the
problem differs in that we focus on balance as an in-sample property rather than a population
property arising from the accurate estimation of a function outputting debiasing weights.
On a theoretical level, the generalization of this conceptual and analytic approach 
from cross-sectional to longitudinal settings follows a different path from the 
analogous generalization of the automatic debiasing approach. And on a practical level, 
the differences are largely in the details, 
e.g. in that we optimize our weights for in-sample balance rather than cross-fit them by 
optimizing a similar criterion in a different fold and tune so that, if we think of our 
weights as estimating this weight-outputting function, we are undersmoothing. These differences between
balancing and automatic debiasing approaches persist from context to context,
e.g. in the discussion of estimation of treatment effects in cross-sectional data
using RKHS models \citep[e.g.][]{hirshberg2019minimax, singh2021debiased}. To our
knowledge, the effect of these differences on performance has not been conclusively understood theoretically or empirically.\footnote{See \citet{bruns2023augmented}, however, for some discussion of undersmoothing.}

Our paper is organized as follows: In section \ref{sec:prelim}, we introduce the notation and assumptions used throughout our work. In particular, we use standard assumptions in causal survival analysis to establish the identifiability of our parameter of interest, the counterfactual survival function $\psi^{a,t}$ at treatment $a$ and time step $t$, as defined in sub-section \ref{subsection:parameter}. Then, we show how using a first-order correction of an estimator of $\psi^{a,t}$ results in a meta estimator that includes the \textit{inverse probability weights} (IPW) defined therein and is closely related to the efficient influence function-based one-step estimator. In section \ref{sec:approach}, we introduce our approach starting with a new characterization of the IPW in the previous section by the Riesz representation theorem, which suggests a new procedure for estimating both the treatment propensity and at-risk probability. We then theoretically show the asymptotic efficiency of our method by showing that it is asymptotically linear with variance characterized by the efficient influence function. Finally, we provide simulations in section \ref{sec:experiment}

\section{Preliminaries}
\label{sec:prelim}

In this section, we review discrete and counterfactual survival analysis, discuss the identification assumptions required to identify the counterfactual survival curve using observable data, and introduce the essential concept of an efficient estimator in this context. All proofs in this section can be found in the Appendix \ref{sec:proof2122}.

%first review the notations and main quantities for discrete and counterfactual survival analysis %including the causal parameter of interest and identification assumptions using the potential outcome framework \citep{neyman1923application, rubin1974estimating}. We refer interested readers to e.g. Chapter 16 of \cite{crowder2012multivariate} for more details on discrete survival analysis. Then, we derive the generic debiased machine learning estimator \ref{eq:the-generic-estimator}, building from the first order approximation of our parameter of interest and the inverse probability weight. 

\subsection{Discrete and Counterfactual Survival Analysis}
\label{sec:counterfactual_surv}
\paragraph{Discrete time.} In this paper, we work with a discrete time grid. Define $T \in \mathcal{T}$, where $\mathcal{T} = [t_{\max}] = \{0,1,\dots,t_{max}\}$, $t_{max} \in \mathcal{Z}_+$. $t_{max}$ may be chosen by the user, or as a result of administrative censoring. Denote $|\mathcal T|$ the total number of time points and $|t| = |\{u \in \mathcal T: u \le t\}|$ the number of time points less than or equal to $t$. We assume that $P(T=0)=0$ so that zero times are ruled out and the the survival curves always start from $(0,1)$. 

\sloppy\paragraph{Data structure.} We define the ideal data unit in counterfactual survival analysis as $(X, E, A, T(0), T(1), C(0), C(1))$, where: $X \in \mathcal{X} \subseteq \R^d$ is the covariate recorded prior to the beginning of the study; $A \in \{0, 1\}$ is a binary random variable indicating e.g. whether or not a patient receives the treatment; $T(a), C(a) \in \mathcal T$ are the counterfactual time-to-event (event time) and time-to-censoring (censor time) of interest under $A=a$. 
Define the factual event time $T = AT(1) + (1-A)T(0)$ and censor time $C = AC(1) + (1-A)C(0)$ and finally the observable time $\td T = \min\{T, C\}$ and event indicator $E = \one(T \le C)$. The observable data unit is now $O = (X, E, A, \td T)$. 

Define the marginal-hazard and marginal-survival functions of latent times $T$ and $C$ as:
\begin{equation}
\label{eq:marginal-functions}
\begin{aligned}
h_t(x,a) &= P(T=t | X=x,A=a,T\ge t)\\
S_t(x,a) &= P(T>t|X=x, A=a)\\
G_t(x,a) &= P(C>t|X=x, A=a).
\end{aligned}
\end{equation}
These definitions imply a one-to-one relationship between $h$ and $S$: letting $t-=\max(0,t-1)$, 
\begin{equation}
\label{eq:one-to-one}
\begin{aligned}
h_t(x,a) &= \frac{S_{t-}(x,a) - S_t(x,a)}{S_{t-}(x,a)}\\
S_t(x,a) &= \prod_{u \le t} (1-h_u(x,a)).
\end{aligned}
\end{equation}
Define the sub-hazard and sub-survival function of observable time and event indicator $\td T, E$ as:
\begin{equation}
\label{eq:sub-functions}
\begin{aligned}
\lambda_t(x,a) &= P(\td T = t, E=1| X=x, A=a,\td T \ge t)\\
H_t(x,a) &= P(\td T > t | X = x, A=a).
\end{aligned}
\end{equation}
$H_t$ is also called the \textit{at-risk} probability (at treatment $a$). When $T$ and $C$ are conditionally independent given $X$, the following lemma relates the sub-distributions and the marginal-distributions:
\begin{proposition}
If $T \perp C | X$ then: $h_t(x,a) = \lambda_t(x,a)$, therefore: 
\begin{equation}
S_t(x,a) = \prod_{u \le t}(1- h_u(x,a)) = \prod_{u \le t}(1- \lambda_u(x,a)).
\end{equation}

Additionally, the at-risk probability decomposes into a product of the marginal-survival functions of the event and censoring: $H_t(x) = S_t(x)G_t(x).$
\label{lem:random_censoring}
\end{proposition}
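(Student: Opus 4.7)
The proof is a direct computation from the definitions together with the conditional independence $T \perp C \mid X$; the main task is careful event algebra in the discrete setting.

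First I would rewrite the events appearing in $\lambda_t(x)$ in terms of the latent variables $T$ and $C$. Since $E=1$ means $T\le C$ and then $\td T = T$, we have $\{\td T = t,\, E=1\} = \{T=t,\, C\ge t\}$, and $\{\td T\ge t\} = \{\min(T,C)\ge t\} = \{T\ge t,\, C\ge t\}$. Conditioning on $X=x$ and applying the conditional independence assumption, both the numerator and denominator of $\lambda_t(x)$ factor, giving
\begin{equation}
\lambda_t(x) = \frac{P(T=t\mid X=x)\, P(C\ge t\mid X=x)}{P(T\ge t\mid X=x)\, P(C\ge t\mid X=x)} = \frac{P(T=t\mid X=x)}{P(T\ge t\mid X=x)} = h_t(x),
\end{equation}
where the last equality is the definition of $h_t(x)$ already recorded in the paper. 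The one subtle point is that the $P(C\ge t\mid X=x)$ factor must be nonzero in order to cancel; this is implicit in conditioning on $\td T \ge t$, since that event has positive probability only when $P(C\ge t\mid X=x)>0$.

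Next, the identity $S_t(x)=\prod_{u\le t}(1-h_u(x))=\prod_{u\le t}(1-\lambda_u(x))$ is immediate: the first equality is the telescoping product already established just above Proposition \ref{lem:random_censoring}, and the second is obtained by substituting $h_u=\lambda_u$ for every $u\le t$ from the step just proved.

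Finally, for the decomposition $H_t(x)=S_t(x)G_t(x)$, I would write $\{\td T > t\} = \{\min(T,C) > t\} = \{T>t,\, C>t\}$, condition on $X=x$, and again apply $T\perp C\mid X$ to get
\begin{equation}
H_t(x) = P(T>t,\, C>t\mid X=x) = P(T>t\mid X=x)\, P(C>t\mid X=x) = S_t(x)\, G_t(x).
\end{equation}
There is no real obstacle here; the only thing to be careful about is distinguishing strict from non-strict inequalities ($\ge t$ in the hazard, $>t$ in the survival function) so that the event decompositions are correct in the discrete time setting.
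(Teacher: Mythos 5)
Your proof is correct and takes essentially the same route as the paper's: rewrite $\{\tilde T = t, E=1\}$ and $\{\tilde T \ge t\}$ in terms of $T$ and $C$, apply the conditional independence $T \perp C \mid X$ to factor, and cancel. The only cosmetic difference is that you factor both numerator and denominator directly, whereas the paper multiplies and divides by $P(T\ge t \mid X=x)$ and then invokes the $H=SG$ decomposition; your added remark about requiring $P(C\ge t \mid X=x)>0$ for the cancellation is a sensible observation the paper leaves implicit.
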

This equivalence enables the estimation of functions of latent time $T$ from observable time $\td T, E$. From here on, we will use the term hazard to refer to the sub-hazard. Finally, we also define the treatment propensity/probability as $\pi(X,a) = 1/ P(A=a|X)$.

\subsection{Causal parameter of interest and identification.} 
\label{subsection:parameter}
We will focus mainly on the \textit{counterfactual survival function at time $t \in \mc T$ and treatment $a \in \{0, 1\}$}: $\estimand^{a,t} = P(T(a) > t)$. Other commonly encountered parameters can be built from it such as the average survival effect at time $t$: $(\estimand^{1, t} - \estimand^{0, t})$ and the treatment-specific mean survival time $\sum_{t \le t_{max}} \estimand^{a,t}$ as well as its average effect counterpart.
Where convenient, we will write $\estimand$ in place of $\estimand^{a,t}$, letting
the treatment and time of interest be inferred from context.

\paragraph{Identification} To identify the counterfactual survival function using the observable data, similar to \citet{hubbard2000nonparametric, bai2013doubly,bai2017optimal,westling2023inference,diaz2019statistical, cai2020one}, we require the following testable and untestable assumptions:
\begin{itemize}
\item (A1) $T(a), C(a) \perp A \vert X$ for each $a \in \{0, 1\}$.
\item (A2) $T(a) \perp C(a) \vert A=a, X$ for each $a \in \{0, 1\}$.
\item (A3) $P(A=a|X) > 0$ almost surely.
\item (A4) $P(C(a) \geq t|X) > 0 $ positivity (censoring),
\end{itemize}
in addition to consistency and non-interference \cite{imbens2015causal}.
We provide explanations of these assumptions in the Appendix.
\begin{proposition}
When Assumptions~(A1)-(A4) hold, $\psi^{a,t}$ can be computed by observable quantities
\begin{equation}\label{eq:identification}
\begin{aligned}
\psi^{a,t} &= \E\left[S_t(X,a)\right] = \E\left[\prod_{u \le t}\left( 1- \lambda_u(X,a) \right)\right].
\end{aligned}
\end{equation}
\label{prop:identification}
\end{proposition}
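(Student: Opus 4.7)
The plan is to proceed by the tower property, peeling off the randomness in $X$ first, then reducing the counterfactual survival probability conditional on $X$ to a conditional survival probability given $(X,A=a)$ via (A1), and finally applying Proposition~\ref{lem:random_censoring} in the stratum $A=a$ to rewrite that survival probability as a product of observable sub-hazards.

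Concretely, I would write
\begin{equation*}
\psi^{a,t} \;=\; P(T(a) > t) \;=\; \E\bigl[\,P(T(a) > t \mid X)\,\bigr],
\end{equation*}
and then use (A1), which says $T(a) \perp A \mid X$, to conclude $P(T(a) > t \mid X) = P(T(a) > t \mid X, A=a)$. Assumption (A3) guarantees that the conditioning event $\{A=a, X=x\}$ has positive probability almost surely, so this conditional probability is well-defined. Next, since on $\{A=a\}$ we have $T = T(a)$ and $C = C(a)$, the quantity above equals $P(T > t \mid X, A = a)$, which is the marginal-survival function $S_t(X,a)$ in the terminology of Section~\ref{sec:discreate_surv_analysis} applied within the substratum $A=a$.

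To express $S_t(X,a)$ through observable sub-hazards, I would invoke Proposition~\ref{lem:random_censoring} conditionally on $A=a$. Combining (A1) and (A2) yields $T(a) \perp C(a) \mid X, A=a$: indeed (A2) gives conditional independence given $(X,A=a)$, so on this event $T \perp C \mid X$ in the relevant conditional law. Applying the lemma in this conditional model then gives
\begin{equation*}
S_t(X,a) \;=\; \prod_{u \le t}\bigl(1 - \lambda_u(X,a)\bigr),
\end{equation*}
where $\lambda_u(X,a) = P(E=1,\td T = u \mid X, A=a, \td T \ge u)$ is the observable sub-hazard. Assumption (A4) (together with (A3)) ensures $P(\td T \ge u \mid X, A=a) > 0$ for all $u \le t \le \tau$, so each sub-hazard in the product is well-defined. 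Taking expectation over $X$ yields the claim.

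The main subtlety, rather than any calculation, is in making the conditioning arguments airtight: one must be careful that (A1) is strong enough to swap $P(T(a) > t \mid X)$ for $P(T(a) > t \mid X, A=a)$ (it is, since (A1) implies joint independence of $(T(a),C(a))$ from $A$ given $X$), and that (A2) actually transports to a statement about the observables $T,C$ in the stratum $A=a$ so that Proposition~\ref{lem:random_censoring} applies. Once these conditional-independence manipulations are written out cleanly, the positivity assumptions (A3)--(A4) simply certify that every conditional probability and hazard that appears is well-defined, and the product formula follows immediately from the lemma.
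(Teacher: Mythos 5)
Your proof is correct and follows essentially the same route as the paper's: iterated expectation over $X$, then (A1) and consistency to replace $P(T(a)>t\mid X)$ by $P(T>t\mid X,A=a)=S_t(X,a)$, then (A2) to invoke Proposition~\ref{lem:random_censoring} within the $A=a$ stratum to convert the marginal hazards to observable sub-hazards, with (A3)--(A4) guaranteeing all conditional quantities are well-defined. You simply spell out the conditioning manipulations more explicitly than the paper's terse chain of equalities.
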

Proposition ~\ref{prop:identification} tells us that we can estimate the causal parameter $\psi^{a,t}$ via the estimable parameters $S_t(X,a)$ or $\lambda_u(X,a)$ (one-to-one). The quality of our estimation of $\lambda_u(X,a)$ will therefore directly influence that of $\psi^{a,t}$, and it will be useful to look at $\psi^{a,t}$ as a \textit{functional} i.e. $\psi^{a,t}(f) = \E[\prod_{u\le t} (1 - f_u(X,a))]$.

As discussed in \citet[Section 5.3]{westling2023inference}, many causal survival effects can be then identified, e.g. the additive effect  $\psi^{1,t} - \psi^{0,t}$ and multiplicative effect $\psi^{1,t}/\psi^{0,t}$ at time $t$. Furthermore, asymptotic approximations for these effects can be derived from corresponding ones for the causal survival curve via the delta method.

\subsection{Efficient estimation of $\psi^{a,t}$ from first-order approximation}
Given an estimate $\hat \lambda$ of the hazard $\lambda$, which we can get e.g. by using a machine-learning method of our choice, the following `plug-in estimate' is a natural estimate of $\psi(\lambda)$. 
\begin{equation}
\psi_n(\hat\lambda) = \frac{1}{n}\sum_{i=1}^n \left[\prod_{u\le t} (1 - \hat \lambda_u(X_i,a)) \right]
\end{equation}

We can improve on this using a first-order correction. Consider the first-order Taylor expansion of $\psi$ around $\hat \lambda$. 
\begin{equation} 
\label{eq:expansion}
\psi(\hat\lambda + h) \approx \psi(\hat \lambda) + d\psi(\hat \lambda)(h)
\end{equation}
Here $d\psi(\hat\lambda)(h)$ is the derivative of $\psi$ at $\hat\lambda$ in the direction $h$. For the specific choice $\hat h = \lambda-\hat\lambda$, the error in our hazard estimate, we get a first-order corrected estimator of $\psi(\lambda)$. This derivative term $d\psi(\hat\lambda)(\lambda - \hat\lambda)$ is something we must estimate. In Lemma \ref{lemma:linearization} in the Appendix, we characterize it this way.
\begin{equation}
\label{eq:derivative-1}
\begin{aligned}
d\psi(\hat\lambda)(h) &= \E \left[\sum_{u \le t}\hat r_u(X,a) h_u(X,a) \right] \quad \text{ for } \\
\hat r_u(X,a) &= -\hat S_t(X,a)\frac{\hat S_{u-}(X,a)}{\hat S_u(X,a)}
\end{aligned}
\end{equation}
We cannot yet use an estimator in this form. While $\hat \lambda$ and $\hat r$ are known quantities derived from our hazard estimate, $\hat h = \lambda-\hat\lambda$ involves the true hazard as well. What is nice is that we have a natural proxy for $\lambda_u$, $Y^u = \one(E=1, \td T=u)$, in the sense that $\lambda$ is the proxy's conditional mean.
\begin{equation}
\lambda_u(x,a) = \E[Y^u \mid X=x, A=a, \tilde T \ge u].
\end{equation}
Now we'll discuss some technical details and notation.

\paragraph{Hazard-like functions.}
$\lambda-\hat\lambda$ is a function with a specific structure, so it is not necessary for the linear approximation \eqref{eq:expansion} to hold for all functions $h$. Instead, we will think of $d\psi(\hat\lambda)(\cdot)$ as a linear functional on the space of `hazard-like functions' $\Lambda$, which we will now define. To do this, it will be useful to work with an additional random variable $G^u = \one(\td T \ge u)$, which we will think of a coordinate of the \emph{random vector} $G = [G^1, G^2,\dots, G^{t_{max}}]$. We will call $G$ a history, as $G^u$ tells us whether our event has happened prior to $u$. In terms of this random vector, the hazard is $\lambda_u(x,a,1)$ where, abusing notation, we define
\begin{equation}
\begin{aligned}
\lambda_u(x,a,g) = \E[ Y^u \mid X=x, A=a, G^u = g^u] 
\end{aligned}
\label{eq:noisy-evaluation}
\end{equation}
The value of the hazard $\lambda_u$ at time $u$ depends on $g$ only through its $u$th $g^u$; our space of hazard-like functions $\Lambda$ is the set of functions with this property. Acknowledging this, we will use the notation $h_u(x,a,g)$ and $h_u(x,a,g^u)$ interchangeably for $h\in \Lambda$, including the use of $h_u(x,a,1)$ and $h_u(x,a,0)$ corresponding to the specific values $g^u=0$ and $g^u=1$. Note that in the notation of previous sections, the hazard corresponds to the specific value $\lambda(x,a,1)$, with $\lambda(x,a,0)=0$ for all $x$ and $a$.

\paragraph{Cross-fitting.}
To simplify our discussion, we will from this point assume that $\hat \lambda$ is \textit{cross-fit} i.e. fit on an auxiliary sample independent of the current sample \citep{chernozhukov2018double}, and therefore is a \textit{deterministic} function with respect to current sample. 
Where expectations are written, they are averages over the current sample, conditional on the one used to fit $\hat\lambda$.
Note that the survival curve can be computed from the hazard and vice-versa \ref{eq:one-to-one}, so $\hat S$ refers to the survival curve corresponding to $\lambda$.

\subsection{Inverse Probability Weighting (IPW)}

A difficult arises when we attempt to use the proxy $Y_u$ in place of $\lambda$ in \eqref{eq:derivative-2}. What we want there is $\lambda_u(X,a,1)$; the hazard for a specific level of treatment and history. What we have is a noisy version of the random variable $\lambda_u(X,A,G^u)$, the hazard at the naturally-occurring level of treatment and history. The solution to this is inverse probability weighting. In particular, we  rewrite the derivative as follows.
\begin{equation}
\begin{aligned}
d\psi(\hat\lambda)(h) &= \E \left[\sum_{u \le t}\hat r_u(X,a) h_u(X,a,1) \right]
% \\&= \E_X \left[\sum_{u \le t}\E_{A,G^u|X} \left[\hat r_u(X,a)\frac{\one(A=a, G^u=1)}{P(A=a, G^u=1|X)} h_u(X,A,G)\right]\right]
%\\=&\sum_{u \le t}\E \left[\hat r_u(X,a)\frac{\one(A=a, G^u=1)}{P(A=a, G^u=1|X)} h_u(X,A,G)\right]
=\sum_{u \le t}\E \Big[\gamma_u(X,A,G) h_u(X,A,G)\Big] \\
&\quad\text{for}\quad 
\gamma_u(X,A,G) = \hat r_u(X,a) \frac{\one(A=a, G^u=1)}{P(A=a,G^u=1|X)}.
\end{aligned}
\label{eq:derivative-2}
\end{equation}
The law of iterated expectations justifies the substitution of $Y_u$ for $\lambda$ in $d\psi(\hat\lambda)(\lambda-\hat\lambda)$
expressed in the form above.
\begin{equation}
d\psi(\hat \lambda)(\hat h) = \sum_{u \le t}\E \Big[\gamma_u(X,A,G) (Y^u - \hat \lambda_u(X,A,G))\Big]
\label{eq:derivative-3}
\end{equation}
Estimating these weights; substituting a sample average for this expectation; and plugging into our Taylor expansion \eqref{eq:expansion} yields the so-called `one-step' estimator. 
\begin{equation}
\label{eq:the-generic-estimator}
\begin{aligned}
&\hat \psi = \psi_n(\hat \lambda) 
+ \frac{1}{n}\sum_{i=1}^n \sum_{u \le t} \hat \gamma_u(X_i,A_i,G_i) \left\{Y_i^u - \hat \lambda_u(X_i,A_i,G_i)\right\}
\end{aligned}
\end{equation}

When $\lambda$ and $\gamma$ are estimated sufficiently well, this estimator will be asymptotically efficient i.e. $\hat \psi - \psi \rightarrow_d \frac{1}{n}\sum_{i=1}^n \phi^{a,t}(O_i) + o_p(n^{-1/2})$ where
\begin{equation} 
\label{EIF}
\begin{aligned}
\phi^{a,t}(O) &= S_t(X,a) - \psi^{a,t}(\lambda) + \sum_{u \le t}\gamma_u(X,A,G)\left\{Y^u - \lambda_u(X,A,G)\right\}
\end{aligned}
\end{equation}
That is, it is asymptotically equivalent to a version of the one-step estimator in which the true values of $\lambda$ and $\gamma$, not the estimated ones, are used. A sufficient condition for this is that the errors $\hat\lambda-\lambda$ and $\hat\gamma-\gamma$ converge at $o(n^{-1/4})$ rate. Roughly speaking, this is because the remainder of the Taylor approximation \eqref{eq:expansion} is quadratic.

Note that $\gamma_u$ contains the \textit{inverse probability weight} (IPW) $\frac{1}{P(A=a,G^u=1|X)}$ which is also the product of the inverse treatment probability $\frac{1}{\pi(X,a)}$ and the inverse at-risk probability $\frac{1}{H_{u-}(X,a)}$ at time $u$, both of which can be large. An intuitive explanation of the role of the IPW is that it re-weighs our observable data $(A,G^u)$ (up-weighing by $\gamma_u(X,a,1)$ when $A=a, G^u=1$ and setting to $0$ otherwise) so that we still get an unbiased estimate based on true hazard $\lambda(X,a,1)$ at potentially counterfactual value $(a,1)$ of $(A, G^u)$. 

The inverse probability weights $\gamma$ are usually estimated using regression-type losses of its nuisance components, the treatment $\pi(X,a)$ and at-risk $H_u(X,a)$ probability. However, this solution suffers from instability, common in all inverse weight-based estimators. When the ground truth functions $H_{u-}(X,a)$ and $\pi(X,a)$ are very small at some observations, naturally their estimators tends to be very small, hence, slight errors in estimating them can result in large errors in the estimation of their inverses (i.e. $1/(x-\epsilon)-1/x \approx \epsilon/x^2$). Thus, lack of overlap in the data means that such estimators will be unstable, with very large sample sizes needed to get estimates of $\pi$ and $H$ accurate enough to tolerate inversion. Moreover, even when overlap in the data is not poor, moderate-sized errors in the estimation of $\pi$ and $H$ that occur at smaller sample sizes results in similar issues.  It is common practice to clip these weights to a reasonable range before using them, but ad-hoc clipping often results in problematic levels of bias. In short, we lack practical and theoretically sound methods to make this inversion step work reliably. To avoid this problem, we propose an alternative covariate balancing approach below.

% \begin{equation}
% \begin{aligned}
% % &d\psi(\hat\lambda)(\lambda - \hat \lambda)
% % \\&= \sum_{u \le t}\E_{X,A,G} \left[\hat r_u(X,a)\omega_u(X,a) (\lambda_u(X,A,G) - \hat \lambda_u(X,A,G))\right]
% % \\&= \sum_{u \le t}\E_{X,A,G} \left[\hat r_u(X,a)\omega_u(X,a) (\E_{E, \tilde T|X,A,G}[Y^u|X,A,G^u] - \hat \lambda_u(X,A,G))\right]
% &d\psi(\hat\lambda)(\lambda - \hat \lambda) = \sum_{u \le t}\E \Big[
% \\&\hat r_u(X,a)\one(A=a,G^u=1)\omega_u(X,a)(Y^u - \hat \lambda_u(X,A,G))\Big]
% \end{aligned}
% \label{eq:derivative-3}
% \end{equation}

% The introduction of $G$ allows us to replace $\lambda$ with its proxy $Y^u$ after an application of the law of iterated expectation. An explanation is, \textit{had we observed} $Y_i^u$ at the value $A=a, G^u=1$, for all observations $O_i$'s, we could have directly replaced $Y^u_i$ into Equation~\ref{eq:derivative-1}. However, we only observe specific values of $A,G$ but not their counterfactuals; this is the fundamental problem in causal inference. The IPW $\omega_u(X,a)$ make up for this difference in observed and counterfactual distributions, allowing us to arrive at an estimable equation. 

\section{Approach}
\label{sec:approach}
%\subsection{Characterization of the IPW via the Riesz representation theorem}
Another way of thinking about the inverse probability weights focuses on what they do rather than their functional form. We will work with an inner product on our space $\Lambda$.
\begin{equation}
\inner{f}{g} = \E \left[\sum_{u \le t} f_u(X,A,G^u) g_u(X,A,G^u)\right]
\end{equation}
for all $f, g \in \Lambda$. We observe that the Riesz representation theorem guarantees that, in the space $\Lambda$, there exists a \textbf{unique} element $\gamma$ that acts (via an inner product) on every function $h\in\Lambda$ like the functional derivative does:
\begin{equation}
\label{eq:riesz-rep}
\begin{aligned}
d\psi(\hat \lambda)(h) = \inner{\gamma}{h} = \sum_{u \le t}\E\left[ \gamma_u(X,A,G) h_u(X,A,G) \right] \quad \textbf{ for all} \quad h \in \Lambda
\end{aligned}
\end{equation}
and, in particular, taking $h=\lambda-\hat\lambda$, we have
\begin{equation}
\begin{aligned}
\inner{\gamma}{\lambda-\hat\lambda} &= \sum_{u \le t}  \E\left[\gamma_u(X,A,G) \left\{ Y^u - \hat \lambda_u(X,A,G) \right\} \right] 
\end{aligned}
\label{eq:riesz-rep-2}
\end{equation}
Looking at Equation~\ref{eq:derivative-2}, we see that the explicit solution to Equation~\ref{eq:riesz-rep} is, 
as promised, the inverse probability weighting function we called $\gamma$ in the previous section. What has changed is emphasis. Here we are characterizing $\gamma$ by \emph{what it does} instead of how it is calculated. This new perspective will is more alligned with how we will be estimating this function.

\subsection{Estimating the Riesz representer $\gamma$.}
Our approach avoids the problematic inversion in the one-step estimator by using the definition of the IPW as the Riesz representer in equation~\ref{eq:riesz-rep}. In particular, we consider the equivalence of the inner product $\smallinner{\gamma}{h}$ to the derivative evaluation $d\psi(\hat\lambda)(h)$, rather than its analytic form. 
% Our approach, which avoids this problematic inversion, focuses not on the analytic form of the Riesz representer but on what it does, i.e., on the equivalence of the inner product $\smallinner{\gamma}{h}$ to the derivative evaluation $d\psi(\hat\lambda)(h)$. 
% As we lack the ability to analytically evaluate the expectations involved in both, we will use the sample average approximations of these quantities: $\smallinner{\gamma}{h}_n$ (defined in \ref{eq:riesz-rep}) and $d\psi_n(\hat\lambda)(h)=\frac{1}{n}\sum_{i=1}^n\sum_{u \le t}\hat r_u(X_i, a) h_u(X_i, a, 1)$. 
First, inspired by the explicit characterization in \eqref{eq:derivative-2}, we ask that, for a set of functions $h\in\Lambda$, our weights $\hat \gamma$ satisfies:
\begin{equation} 
\label{eq:sample-balance}
\begin{aligned}
\frac{1}{n}\sum_{i=1}^n \sum_{u \le t} \hat r_u(X_i, a, 1) h_u(X_i,a,1) 
&\approx \frac{1}{n}\sum_{i=1}^n \sum_{u \le t} \hat \gamma_{iu} h_u(X_i,A_i,G_i).
\end{aligned}
\end{equation}
In the simpler setting in \citet{hirshberg2021augmented}, this approximation has meaning beyond that, suggested by its relationship to the population analog \eqref{eq:riesz-rep}. Along with the accuracy of the estimator $\hat\lambda$ and therefore of the linear approximation in \eqref{eq:derivative-1}, the quality of the \emph{in-sample approximation} described by \eqref{eq:sample-balance} for the specific function $\hat h=\lambda-\hat\lambda$ is one of two essential determinants of the estimator's bias. See Appendix~\ref{sec:sketch}, where we include an informative decomposition of our estimator's error and further discussion.

In light of that, we generalize the approach of \citet{hirshberg2021augmented} for estimating $\gamma$ by ensuring that \eqref{eq:sample-balance} holds for a set $\mathcal{M}$ of hazard-like functions $h$, which we deem as a \emph{model} for function $\hat h = \lambda-\hat\lambda$. In particular, we ask for weights that (i) are not too large, to control our estimator's variance, and (ii) ensure the approximation in Equation (\ref{eq:sample-balance}) is accurate uniformly over model $\mathcal{M}$. Thus, choosing a norm $\norm{\cdot}$ and 
taking the set of vector-valued functions 
$[h_1 \ldots h_{t_{max}}]$ with $\sum_u \norm{h_u}^2 \le 1$ as our model for $[\hat h_1 \ldots \hat h_{t_{max}}]$, we choose weights by solving the following optimization problem
\begin{equation}
\label{eq:weight-optimization}
\begin{aligned}
&\hat\gamma = \operatorname*{argmin}_{\gamma \in \R^{n|\mc T|}} \left\{ I(\gamma)^2 + \frac{\sigma^2}{n}\sum_{i=1}^n \sum_{u \le t} \gamma_{iu}^2 \right\} \quad \text{where}\\
&I(\gamma) = \max_{\substack{\sum_{u \le t}\norm{h_u}^2 \le 1 }} \Bigg\{\frac{1}{n}\sum_{i=1}^n \sum_{u \le t} \hat r_u(X_i, a, 1) h_u(X_i,a,1)\\
&\qquad\qquad\qquad\qquad - \ \frac{1}{n}\sum_{i=1}^n \sum_{u \le t}  \gamma_{iu} h_u(X_i,A_i,G_i).\Bigg\}
\end{aligned}
\end{equation}
What remains is to choose this norm with the intention that $\smallnorm{\hat h_u}$ is small for all $u$. If our model is correct in the sense that $\sum_u \smallnorm{\hat h_u}^2 \le B^2$, then $B$ times the maximal approximation error $I(\hat\gamma)$ bounds the approximation error in \eqref{eq:sample-balance}.

As usual, there is a natural trade-off in choosing this model---if we take it to be too small, $\smallnorm{\hat h_u}$ will be large or even infinite; on the other hand if we take it to be too large, we will be unable to find weights for which the approximation \eqref{eq:sample-balance} is highly accurate for all functions in the model. Choosing a norm with a unit ball that is a \emph{Donsker class}, e.g. a Reproducing Kernel Hilbert Space (RKHS) norm like the one we use in our experiments, is a reasonable trade-off that is common in the literature on minimax and augmented minimax estimation of treatment effects \citep[e.g.,][]{hirshberg2019minimax, kallus2016generalized}. When we do this, our estimator will be asymptotically efficient if the true hazard functions $\lambda_u$ are in this class and we estimate them via empirical risk minimization with appropriate regularization. We discuss the computational aspects of this problem in Appendix~\ref{sec:weight-optimization-details}.

\subsection{Asymptotic Efficiency}
In this section, we will discuss the asymptotic behavior of our estimator, giving sufficient conditions for it to be asymptotically efficient. Recall that we assume $\hat\lambda$ is cross-fit.

Our first condition is that it converges faster than fourth-root rate. This ensures the error of our first-order approximation \eqref{eq:expansion}, quadratic in the error $\hat h=\lambda-\hat\lambda$, is asymptotically negligible. 
\begin{assumption} for all $u \le t$,
\label{ass:rate}
$$\norm{\hat\lambda_u - \lambda_u}_{L_2(P)} = o_p(n^{-1/4}).$$
\end{assumption}
Our second condition is that its error is \emph{bounded} in the norm $\norm{\cdot}$ used to define our model. This ensures that the bound on $I(\hat\gamma)$ achieved via the optimization in \eqref{eq:weight-optimization} implies a comparable bound on the error of the derivative approximation $\smallinner{\hat\gamma}{h}_n\approx d\psi(\hat\lambda)(h)$ in \eqref{eq:sample-balance}
for the relevant perturbation $\hat h=\lambda-\hat\lambda$.
\begin{assumption} for all $u \le t$, 
\label{ass:strongnorm-bounded}
$$\norm{\hat \lambda_u - \lambda_u} = O_p(1).$$
\end{assumption}
If our model is correctly specified in the sense that $\norm{\lambda_u} < \infty$ for $\forall u$,
a sensibly tuned $\norm{\cdot}$-penalized estimator of $\lambda_u$ will have these two properties \citep[see, e.g.,][Remark 2]{hirshberg2021augmented}.

Our third condition is that we have a sufficient degree of overlap. 
\begin{assumption}
\label{ass:overlap}
$\E[1/P(A=a, \td T \ge u \mid X)] < \infty$
\end{assumption}
This is a substantially weakened version of the often-assumed `strong overlap' condition that $P(A=a, \td T \ge u \mid X)$ is bounded away from zero,
allowing this probability to approach zero for some $X$ as long as it is `typically' elsewhere. 

Our final condition is, for the most part, a constraint on the complexity of our model.
\begin{assumption}
\label{ass:donskerity}
The unit ball $\mathcal{B}=\left\{ h \ : \ \norm{h} \le 1 \right\}$ is Donsker and uniformly bounded in the sense that $\max_{h : \norm{h} \le 1}\norm{h}_{\infty} < \infty$.
Furthermore, $\left\{\frac{1(A=a, \td T \ge u) \ h(\cdot)}{P(A=a, \td T \ge u \mid X=\cdot)} \ : \ \norm{h} \le 1 \right\}$ is Donsker for all $u$.
\end{assumption}
The `furthermore' clause here is implied by the first clause if strong overlap holds, as multiplication by 
the inverse probability weight $1/P(A=a, \td T \ge u \mid X=\cdot)$ will not problematically
increase the complexity of the set of functions $h \in \mathcal{B}$ \emph{if those weights are bounded}.
It is also implied by a strengthened version of the first clause, the condition that $\mathcal{B}$ have a bounded uniform entropy integral,
with our weaker overlap requirement of Assumption~\eqref{ass:overlap} \citep[Remark 4]{hirshberg2021augmented}. 
This tends not to be a problematic strengthening, as the unit balls of many norms that are used in this context,
including H\"older and Sobolev norms involving bounds on $s > \operatorname{dimension}(X)/2$ derivatives
and the RKHS norm we use in our experiments, do have bounded uniform entropy integrals \citep{nickl2007bracketing, zhou2002covering}.

We are now ready to state our main theoretical result:
\begin{theorem}
\label{thm:efficiency}
Suppose $\hat\lambda$ is a hazard estimator fit on an auxiliary sample and Assumptions~\ref{ass:rate}-\ref{ass:donskerity} are satisfied. 
Then the estimator $\hat\psi^{a,t}$ described in \eqref{eq:the-generic-estimator}, using the Riesz Representer estimate $\hat\gamma$ obtained by solving 
\eqref{eq:weight-optimization} for any fixed $\sigma > 0$, is \emph{asymptotically linear} with influence function $\phi^{a,t}$. That is, it has the asymptotic approximation
$$\hat\psi^{a,t} - \psi^{a,t} = \frac{1}{n}\sum_{i=1}^n \phi^{a,t}(O_i) + o_p(n^{-1/2}).$$
where $\phi^{a,t}$ is the efficient influence function defined in Equation \ref{EIF}.
\end{theorem}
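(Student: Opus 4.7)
The plan is to follow the error-decomposition strategy for augmented minimax linear estimators developed in \citet{hirshberg2021augmented} and \citet{chernozhukov2022debiased}, writing $\hat\psi - \psi$ as the empirical mean of $\phi(O_i)$ plus a remainder that is $o_p(n^{-1/2})$. Adding and subtracting $\psi_n(\bar\lambda)$ and $\inner{\hat\gamma}{\bar\lambda}_n$ inside \eqref{eq:the-generic-estimator}, and applying the first-order Taylor expansion \eqref{eq:expansion}, I arrive at
\begin{equation*}
\hat\psi - \psi \;=\; \underbrace{[\psi_n(\bar\lambda) - \psi]}_{(\mathrm{I})} + \underbrace{\inner{\gamma}{Y-\bar\lambda}_n}_{(\mathrm{II})} + \underbrace{\inner{\hat\gamma-\gamma}{Y-\bar\lambda}_n}_{(\mathrm{III})} + \underbrace{\bigl[\inner{\hat\gamma}{\bar\lambda-\hat\lambda}_n - d\psi_n(\hat\lambda)(\bar\lambda-\hat\lambda)\bigr]}_{(\mathrm{IV})} + \underbrace{R_n}_{(\mathrm{V})},
\end{equation*}
where $R_n$ is the second-order Taylor remainder of $\psi_n$ around $\bar\lambda$. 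Comparing to \eqref{eq:eif} and the explicit form \eqref{eq:riesz-rep-explicit} of $\gamma$, terms $(\mathrm{I})$ and $(\mathrm{II})$ together equal $\frac{1}{n}\sum_i \phi(O_i)$, so it suffices to show $(\mathrm{III})$, $(\mathrm{IV})$, and $(\mathrm{V})$ are each $o_p(n^{-1/2})$.

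Term $(\mathrm{V})$ is controlled by Assumption \ref{ass:rate}: because $\psi(\bar l)=\E[\prod_{u\le t}(1-\bar l_u(X,a,1))]$ is a product of a fixed number of bounded factors, its second-order Taylor remainder at $\hat\lambda$ is bounded by a constant times $\sum_u\norm{\hat\lambda_u(\cdot,a,1)-\bar\lambda_u(\cdot,a,1)}_{L_2(P_n)}^2$, which is $o_p(n^{-1/2})$ after transferring between empirical and population $L_2$-norms using cross-fitting and the uniform boundedness in Assumption \ref{ass:donskerity}. Term $(\mathrm{III})$, conditional on the auxiliary sample on which $\hat\gamma$ is fit, is an average of independent mean-zero increments because $\E[Y^u-\bar\lambda_u\mid X,A,G_u]=0$; its conditional variance is of order $n^{-1}\norm{\hat\gamma-\gamma}_{L_2(P)}^2$, so consistency of $\hat\gamma$ to $\gamma$ in $L_2(P)$---which follows from the penalty in \eqref{eq:weight-optimization} forcing $\hat\omega$ toward the oracle weights together with Assumption \ref{ass:overlap}---gives $(\mathrm{III})=o_p(n^{-1/2})$.

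The heart of the proof, and the step I expect to be the main obstacle, is the balance error $(\mathrm{IV})$. By construction of $I(\cdot)$, the inequality $|\inner{\hat\gamma}{h}_n - d\psi_n(\hat\lambda)(h)|\le I(\hat\omega)\,(\sum_u\norm{h_u}^2)^{1/2}$ holds for every hazard-like $h$, so Assumption \ref{ass:strongnorm-bounded} yields $|(\mathrm{IV})|\le I(\hat\omega)\cdot O_p(1)$. To upper-bound $I(\hat\omega)$, I would plug the oracle weights $\omega^*_{iu}=\omega_u(X_i,a)$ into \eqref{eq:weight-optimization} as a feasible point: Assumption \ref{ass:overlap} makes the variance penalty $O_p(n^{-1})$, and the Donsker condition in Assumption \ref{ass:donskerity} applied to $\{h(\cdot)/P(A=a,\td T\ge u\mid X=\cdot):\norm{h}\le 1\}$ delivers $I(\omega^*)=O_p(n^{-1/2})$ via a standard maximal inequality; hence $I(\hat\omega)=O_p(n^{-1/2})$ and $(\mathrm{IV})=O_p(n^{-1/2})$. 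To sharpen this to $o_p(n^{-1/2})$ requires a localized empirical-process argument that exploits $\norm{\bar\lambda_u-\hat\lambda_u}_{L_2(P)}=o_p(n^{-1/4})$ from Assumption \ref{ass:rate}; this mirrors Theorem~2 of \citet{hirshberg2021augmented}, adapted to the vector-valued hazard-like setting, and is where I expect the bulk of the technical work to lie.
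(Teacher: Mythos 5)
Your decomposition is a legitimate reorganization of the paper's proof and relies on the same Hirshberg--Wager machinery, but there are three places where the argument as sketched goes wrong.

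First, you assert that term $(\mathrm{III})$ is an average of independent mean-zero increments ``conditional on the auxiliary sample on which $\hat\gamma$ is fit.'' But $\hat\gamma$ is \emph{not} fit on an auxiliary sample: only $\hat\lambda$ is cross-fit, whereas the weights $\hat\omega$ (and hence $\hat\gamma$) are obtained by solving \eqref{eq:weight-optimization} on the very sample over which the estimator averages. The conditional-mean-zero structure can still be rescued because the optimization depends only on $\{(X_i,A_i,G_{iu})\}$ and not on $Y^u_i$, so one can condition on those variables and the auxiliary sample simultaneously; but this must be said carefully, and it means the required $L_2(P_n)$ consistency of $\hat\gamma$ to $\gamma$ is not a soft side condition---it is precisely the content of the Hirshberg--Wager argument, not something that ``follows from the penalty forcing $\hat\omega$ toward the oracle weights.'' In fact the ridge-type penalty in \eqref{eq:weight-optimization} shrinks weights toward \emph{zero}; it is the imbalance term $I(\omega)^2$, together with the overlap assumption, that pushes $\hat\omega$ toward the oracle. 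This is the central technical point, and your sketch treats it as a throwaway. The paper sidesteps this by never comparing $\hat\gamma$ to $\gamma$ directly: Lemma~\ref{lemma:negligible-imbalance} shows that the $\hat\gamma$-weighted sum, minus the derivative, equals an \emph{oracle} term (using $\hat r_u$ and the true $\omega_u$) plus $o_p(n^{-1/2})$, and Lemma~\ref{lemma:eif-continuity} then converts $\hat\phi$ to $\phi$ via mean-square continuity of the influence functional, using only $\hat\lambda \to \lambda$ and Assumption~\ref{ass:overlap}.

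Second, you assign Assumption~\ref{ass:rate}'s $o_p(n^{-1/4})$ rate to sharpening the balance error $(\mathrm{IV})$, saying the localization mirroring Theorem~2 of \citet{hirshberg2021augmented} ``exploits $\norm{\bar\lambda_u - \hat\lambda_u}_{L_2(P)} = o_p(n^{-1/4})$.'' This is misplaced: the paper's Lemma~\ref{lemma:negligible-imbalance} explicitly notes that for the balance-and-noise portion an $o_p(1)$ rate suffices. The fourth-root rate is spent entirely on the second-order Taylor remainder, your term $(\mathrm{V})$. This matters because it identifies which assumptions are doing which work in the proof.

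Third, and more minor, your feasibility argument for $I(\hat\omega) = O_p(n^{-1/2})$ is a reasonable starting point, but $(\mathrm{IV})$ being $o_p(n^{-1/2})$ is not obtained by sharpening $I(\hat\omega)$ beyond $n^{-1/2}$---that would be too strong. The actual Hirshberg--Wager argument shows that the imbalance \emph{for the specific direction} $h = \bar\lambda - \hat\lambda$ is asymptotically negligible precisely because this direction is both bounded in $\norm{\cdot}$ (Assumption~\ref{ass:strongnorm-bounded}) and shrinking in $L_2(P)$ (the $o_p(1)$ part of Assumption~\ref{ass:rate}), and because the overweighted unit ball $\{h/P(A=a,\td T \ge u \mid X=\cdot) : \norm{h}\le 1\}$ is Donsker (Assumption~\ref{ass:donskerity}). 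Your sketch gestures at this but presents it as an unresolved ``obstacle,'' when it is actually the cited theorem applied more or less directly once one recognizes the structural match---which the paper does explicitly in the proof of Lemma~\ref{lemma:negligible-imbalance}, including the need for a triangular-array version to handle the sample-size-dependent $\hat r_u$.
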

 It follows, via the central limit theorem, that under these conditions $\sqrt{n}(\hat\psi - \psi)$ is asymptotically normal with 
mean zero and variance $V=\E\left[ \phi^{a,t}(O)^2\right]$. This justifies a standard approach to inference based on this
asymptotic approximation, i.e., based on the t-statistic $\smash{\sqrt{n}(\hat\psi-\psi)/\hat V^{1/2}}$ being approximately standard normal if 
$\hat V$ is a consistent estimator of this variance $V$.

As usual for estimators involving cross-fitting, 
working with multiple folds and averaging will yield an estimator 
with the same characterization without an auxiliary sample \citep[e.g.,][]{chernozhukov2018double}. The resulting
estimator is asymptotically linear on the whole sample and, having the efficient influence function $\phi$,
is \textit{asymptotically efficient}.\footnote{There are many equivalent formal descriptions of what asymptotic efficiency means \citep[e.g.,in][Chapter 25]{van2000asymptotic}. In essence, it implies that no estimator can be reliably perform better in asymptotic terms, either in terms of criteria for point estimation like mean squared error 
or in terms of inferential behavior like the power of tests against local alternatives.}

% Move to experiments, where we do describe a specific model.
%\begin{remark}
%The optimization problem \eqref{eq:weight-optimization} we use to estimate the weights can be solved efficiently when our model is an RKHS ball. It is, in essence, a Ridge regression problem; the weights $\hat\omega$ have a closed-form solution in terms of the space's kernel. We discuss this at greater length in the supplemental material.
%\end{remark}
% \input{sections/method}
% \input{sections/theory}
% \input{sections_arxiv_icml/relatedwork}
\newpage

\section{Experiments}
%\textcolor{red}{TO DO}
%\begin{itemize}
%    \item remove OR-S; DR-Clip; CSFHTE from experiments plots
%    \item add back our method (theoretically valid) into results 
%    \item revise related work: combine Westling method in the CRF methods (Westling is implicitely similar to CRF) acknowledge TMLE estimators within the context of DR.
%    \item add sentence in the conclusioon about: difficulty on how to balance models with temporal correlation due to bias. RKHS has correlation over time, use the same model when you do the weigths. That could be bias but it'd interesting to see how it is empirically biased. \khiem{so far it is worse.}
%    \item add model/get results when considering t into the model
%    \item fixes to approach section discussed last week, esp. a clear and in-one-place definition of the space of 'hazard-like functions'; this matters for the theory.
% \end{itemize}

\label{sec:experiment}

\begin{figure*}[htb!]
\includegraphics[width=\textwidth]{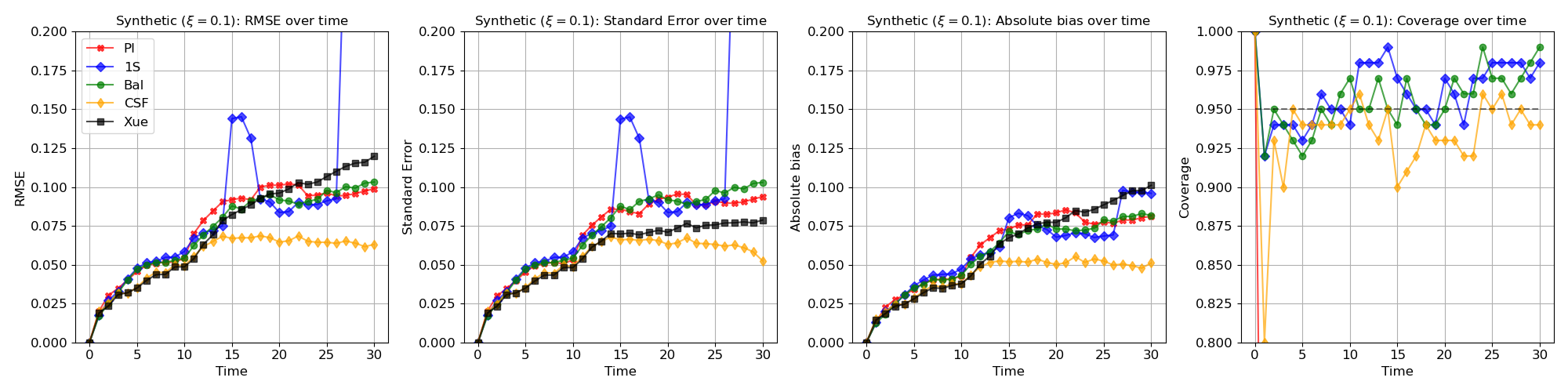} 
\includegraphics[width=\textwidth]{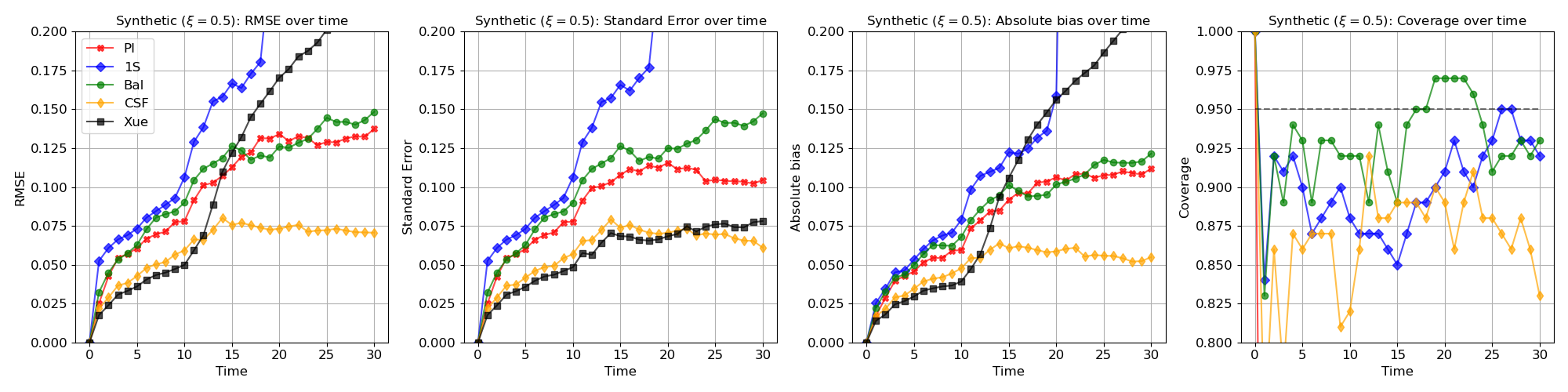} 
\includegraphics[width=\textwidth]{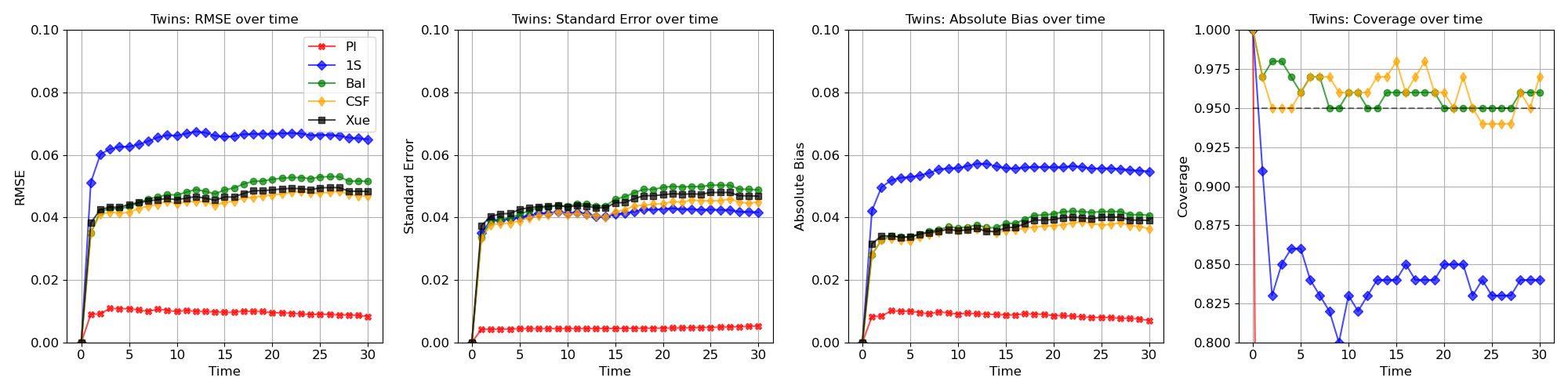} 
\caption{Metrics over time. The columns are RMSE, Standard Error, Absolute Bias and Coverage, while the rows are Synthetic (more overlap $\xi=0.1$), Synthetic (less overlap $\xi=0.5$) and Twins. The first 3 metrics generally increase because of the lack of overlap in time caused by a decreasing at-risk probability.}
\label{fig:metrics-over-time}
\end{figure*}

We now describe the experimental evidence concerning our estimator. As ground truth values are almost never  available in causal inference, synthetic or semi-synthetic data (i.e. synthesized data based on real data) is the standard. We focus on the experiments that provide the most insight into the behavior of our estimator and its competitors; additional experiments and metrics along with implementation details are included in Appendix \ref{sec:experiments_details}.

%\subsection{Baselines, Metrics and datasets}
We primarily compare the performance of our balancing estimator (\textbf{Bal}) in estimating the additive effect at time t: $\Delta^t = \psi^{1,t} - \psi^{0,t}$ with other kernel-based approaches: the plugin estimator (\textbf{PI}), the one-step estimator (\textbf{1S}), and the approach of \citet{xue2023rkhs} (\textbf{Xue}). We also compare with the state-of-the-art causal survival forest (\textbf{CSF}). Our error metrics are RMSE, standard error, bias, and coverage of the 95\% confidence interval. 
We excluded \textbf{PI} and \textbf{Xue} from the coverage plots because there is no standard approach or theory supporting the construction of confidence intervals based on this estimators. From what we see, it appears inference based on these approaches (without bias correction) would be challenging because their bias tends to be large relative to their standard error.

%We include 2 implementation of \textbf{1S}, one with clipping the weight's denominator to be at least $10^{-3}$ and one without. 
%(1) %the relative root-mean-squared-error (\textbf{Relative RMSE}), defined as the RSME of an estimator divided by the RMSE of the \textbf{OR} estimator, which we take to be the baseline, and (2) the Bias over Standard Error ratio (\textbf{Bias/StdE}). Let $\{\hat\Delta^t_q\}_{q=1}^Q$ be our estimates over $Q$ simulations and $\Delta^t$ the ground truth, The RSME, Bias, and Standard Error at time $t$ are defined as $\sqrt{\frac{1}{Q}\sum_q(\hat\Delta^t_q - \Delta^t)^2}, \frac{1}{Q} \sum_q(\hat\Delta^t_q - \Delta^t), \sqrt{\frac{1}{Q}\sum_{q}(\hat\Delta^t_q - \frac{1}{Q}\sum_q \hat\Delta^t_q)^2}$, respectively.

%% R code for probability a random variable is in the usual 99% interval for a standard normal
%% when it's actually normal with mean 1 and 3 (or maybe -1 and -3; doesn't matter) respectively 
%c = qnorm(.995)
%pnorm(c+1)-pnorm(-c+1)
%pnorm(c+3)-pnorm(-c+3)
%[1] 0.9422929
%[1] 0.3357207

%\subsection{Datasets}

We use the two datasets from \citet{curth2021survite} with slight modifications. The first dataset \textbf{Synthetic} has the following data generating process: covariate $X \sim \mc N(0, 0.8 \times I_{10} + 0.2 \times J_{10})$ where $I_{10}$ and $J_{10}$ are 10-dimensional identity matrix and all-one matrix respectively, treatment assignment $A \sim \text{Bern}\left(\xi \times \sigma\left(\sum_p x_p\right)\right)$ where $\sigma$ is a sigmoid function, and $\sum_p x_p$ is the sum of all covariate of a specific $x$. The parameter $\xi$ represents 'lack of overlap': the higher $\xi$, the more uneven propensity is. $\xi$ is 0.3 by default and varies in $\xi$ in $\{0.1, 0.2, 0.3, 0.4, 0.5\}$. We set $t_{max}=30$. The event and censor hazards are:
\begin{align*}
&h_t(a, x)=\\
&\begin{cases}
  0.1\sigma(-5x_1^2 - a \times (\one(x_3 \ge 0) + 0.5)) & \text{for}\ t \le 10 \\
  0.1\sigma(10x_2 - a \times (\one(x_3 \ge 0) + 0.5)) & \text{for}\ t > 10\\
\end{cases}\\
&h_{C,t}(a,x)=
\begin{cases}
(0.01 \times \sigma(10x_4^2) & \text{for}\ t < 30\\
1 & \text{for}\ t \ge 30\\
\end{cases}
\end{align*}
The second dataset is semi-synthetic based on the \textbf{Twins} dataset \citep{louizos2017causal, yoon2018ganite}. The time-to-event outcome is the time-to-mortality of each twin. Each observation $x$ has 30 covariates. The time unit is day and we also set $t_{max}=30$. We create artificial treatment and censoring: treatment assignment 
$A \sim \text{Bern}(\sigma(w_1^\top x + e))$ where $w_1 \sim \text{Uniform}(-0.1, 0.1)^{30\times 1}$ and $e \sim \mc N(0, 1^2)$. Censoring time $C \sim \text{Exp}(10\times \sigma(w_2^\top x))$ where $w_2 \sim \mc N(0, 1^2)$. The treatment is which twin is born heavier. We standardize covariate $x$ for training and only after creating the datasets. For each dataset, we sample 200 observations in each of $Q=100$ simulation runs.

Figure \ref{fig:metrics-over-time} shows the effect of lack of overlap over time. The first 3 metrics RMSE, Standard Error and Bias generally increase over time. \textbf{1S} consistently encounters large jumps because of the inversion of small estimated probability, while \textbf{Bal} does not. Compared to the other 3 baselines, we see that \textbf{Xue} performs badly as time increases because it assumes unconditionally independent censoring and so does not correct for censoring bias. \textbf{PI} performs on par with \textbf{Bal}, but note again that neither \textbf{Xue} and \textbf{PI} supports constructing valid confidence interval. The method that performs best is \textbf{CSF} which uses random forest and therefore is generally better than kernels, especially when the data consists of discrete jumps. Surprisingly, \textbf{CSF}'s first 3 metrics are quite stable from time 15 onwards, even across the 2 propensity overlap settings, while our method \textbf{Bal} increases more relatively. Regarding coverage, in \textbf{Synthetic}, \textbf{CSF} is slightly worse than \textbf{Bal} and \textbf{1S} throughout time, and in \textbf{Twins} performs similarly to \textbf{Bal}, while \textbf{1S} is significantly worse.

We remark that, except for \textbf{CSF}, all methods that we implement use kernel regressions. \textbf{CSF} also provides theoretical guarantees (asymptotic efficiency) similar to ours. It does not seem to suffer from small probability inversion; indeed random forests generally estimate more conservative probabilities even if the true values are more extreme. Our method \textbf{Bal} performs on par or better than other kernel-based methods and is particular stable compared to \textbf{1S}, showing that it significantly improves on the \textbf{1S} baseline whose problem we specifically target. 

\begin{figure*}[htb!]
\includegraphics[width=\textwidth]{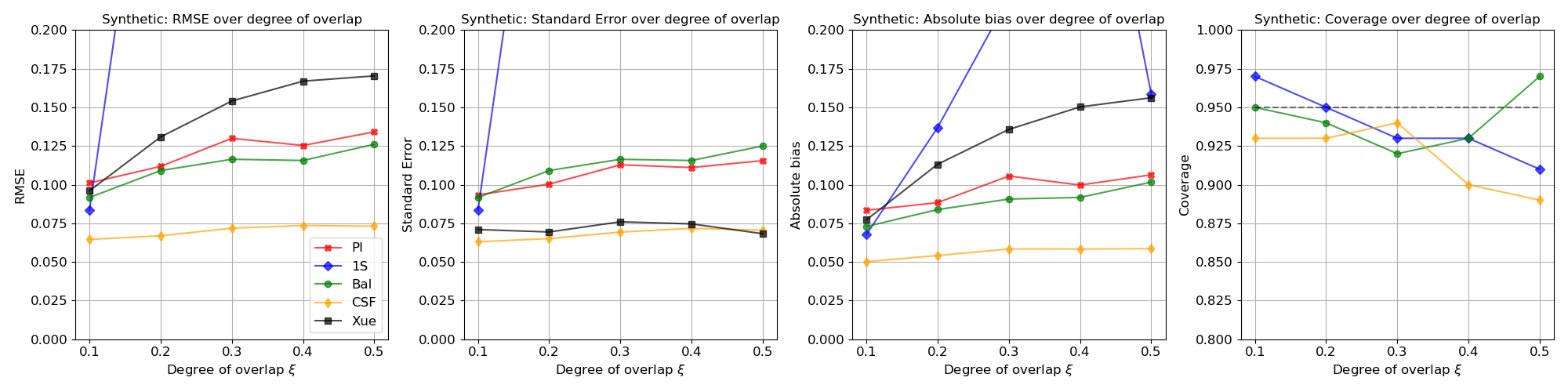} 
\caption{Metrics over different degree of lack of propensity overlap $\xi$ for \textbf{Synthetic}.}
\label{fig:metrics-over-propensity}
\end{figure*}

In figure \ref{fig:metrics-over-propensity}, we further plot the same metrics across different degree of propensity overlap $\xi$ at time $T=20$, where the larger $\xi$ is the less overlap. Again, we see that RMSE, Standard Error and Bias generally increase and coverage decreases with less overlap. We can make similar observations to the lack of overlap in time plots: \textbf{1S} is very unstable with large jumps, while \textbf{Bal} performs competitively in terms of RMSE, Standard Error and Bias, and performs best in terms of coverage at less overlap.

\section{Conclusion}
We propose a kernel covarate-balancing estimator for the counterfactual survival curve, an important statistic in causal inference and survival analysis. Theoretically, we show that it is asymptotically efficient. Empirically, the method improves over kernel-based baselines, especially the one-step estimator which is also asymptotically efficient but uses small estimated probability inversion. We found surprisingly that the state-of-the-art method causal survival forest performs very well even when the lack of overlap means the true treatment and at-risk probabilities are very small. In future works, we would like to investigate alternative covariate-balancing methods that are less dependent on kernels but still satisfy the asymptotic efficiency property. Neural networks as nuisance estimators for example can be extremely powerful because they can handle diverse and rich data types such as image, text and graph that tree-based methods are usually worse at, but on the other hands neural networks are even more prone to overfitting to extreme probability (close to 1 or 0). Therefore, combining our meta approach with neural networks can unlease the full potential of the benefit that covariate-balancing methods offer, while making our current theory work for the neural network estimators will be challenging and interesting.

\clearpage
\printbibliography

@article{imai2014covariate,
  title={Covariate balancing propensity score},
  author={Imai, Kosuke and Ratkovic, Marc},
  journal={Journal of the Royal Statistical Society: Series B: Statistical Methodology},
  pages={243--263},
  year={2014},
  publisher={JSTOR}
}

@article{hirshberg2019minimax,
  title={Minimax Linear Estimation of the Retargeted Mean},
  author={Hirshberg, David A and Maleki, Arian and Zubizarreta, Jose},
  journal={arXiv preprint arXiv:1901.10296},
  year={2019}
}

@article{hirshberg2021augmented,
  title={Augmented minimax linear estimation},
  author={Hirshberg, David A and Wager, Stefan},
  journal={The Annals of Statistics},
  volume={49},
  number={6},
  pages={3206--3227},
  year={2021},
  publisher={Institute of Mathematical Statistics}
}

@article{kallus2022optimal,
  title={Optimal weighting for estimating generalized average treatment effects},
  author={Kallus, Nathan and Santacatterina, Michele},
  journal={Journal of Causal Inference},
  volume={10},
  number={1},
  pages={123--140},
  year={2022},
  publisher={De Gruyter}
}

@article{kallus2021optimal,
  title={Optimal balancing of time-dependent confounders for marginal structural models},
  author={Kallus, Nathan and Santacatterina, Michele},
  journal={Journal of Causal Inference},
  volume={9},
  number={1},
  pages={345--369},
  year={2021},
  publisher={De Gruyter}
}

@article{ben2021balancing,
  title={The balancing act in causal inference},
  author={Ben-Michael, Eli and Feller, Avi and Hirshberg, David A and Zubizarreta, Jos{\'e} R},
  journal={arXiv preprint arXiv:2110.14831},
  year={2021}
}

@article{li2018balancing,
  title={Balancing covariates via propensity score weighting},
  author={Li, Fan and Morgan, Kari Lock and Zaslavsky, Alan M},
  journal={Journal of the American Statistical Association},
  volume={113},
  number={521},
  pages={390--400},
  year={2018},
  publisher={Taylor \& Francis}
}

@article{wong2017kernel,
  title={Kernel-based covariate functional balancing for observational studies},
  author={Wong, Raymond KW and Chan, Kwun Chuen Gary},
  journal={Biometrika},
  volume={105},
  number={1},
  pages={199--213},
  year={2017},
  publisher={Oxford University Press}
}

@book{imbens2015causal,
  title={Causal inference in statistics, social, and biomedical sciences},
  author={Imbens, Guido W and Rubin, Donald B},
  year={2015},
  publisher={Cambridge University Press}
}

@article{zubizarreta2015stable,
  title={Stable weights that balance covariates for estimation with incomplete outcome data},
  author={Zubizarreta, Jos{\'e} R},
  journal={Journal of the American Statistical Association},
  volume={110},
  number={511},
  pages={910--922},
  year={2015},
  publisher={Taylor \& Francis}
}

@article{hainmueller2012entropy,
  title={Entropy balancing for causal effects: A multivariate reweighting method to produce balanced samples in observational studies},
  author={Hainmueller, Jens},
  journal={Political analysis},
  pages={25--46},
  year={2012},
  publisher={JSTOR}
}

@article{xue2023rkhs,
  title={RKHS-based covariate balancing for survival causal effect estimation},
  author={Xue, Wu and Zhang, Xiaoke and Chan, Kwun Chuen Gary and Wong, Raymond KW},
  journal={Lifetime Data Analysis},
  pages={1--25},
  year={2023},
  publisher={Springer}
}

@article{yiu2022joint,
  title={Joint calibrated estimation of inverse probability of treatment and censoring weights for marginal structural models},
  author={Yiu, Sean and Su, Li},
  journal={Biometrics},
  volume={78},
  number={1},
  pages={115--127},
  year={2022},
  publisher={Wiley Online Library}
}

@article{chernozhukov2022automatic,
  title={Automatic debiased machine learning of causal and structural effects},
  author={Chernozhukov, Victor and Newey, Whitney K and Singh, Rahul},
  journal={Econometrica},
  volume={90},
  number={3},
  pages={967--1027},
  year={2022},
  publisher={Wiley Online Library}
}

@article{cui2023estimating,
  title={Estimating heterogeneous treatment effects with right-censored data via causal survival forests},
  author={Cui, Yifan and Kosorok, Michael R and Sverdrup, Erik and Wager, Stefan and Zhu, Ruoqing},
  journal={Journal of the Royal Statistical Society Series B: Statistical Methodology},
  volume={85},
  number={2},
  pages={179--211},
  year={2023},
  publisher={Oxford University Press US}
}

@article{louizos2017causal,
  title={Causal effect inference with deep latent-variable models},
  author={Louizos, Christos and Shalit, Uri and Mooij, Joris M and Sontag, David and Zemel, Richard and Welling, Max},
  journal={Advances in neural information processing systems},
  volume={30},
  year={2017}
}

@article{singh2021debiased,
  title={Debiased kernel methods},
  author={Singh, Rahul},
  journal={arXiv preprint arXiv:2102.11076},
  year={2021}
}

@article{bruns2023augmented,
  title={Augmented balancing weights as linear regression},
  author={Bruns-Smith, David and Dukes, Oliver and Feller, Avi and Ogburn, Elizabeth L},
  journal={arXiv preprint arXiv:2304.14545},
  year={2023}
}

@article{stitelman2011targeted,
  title={Targeted maximum likelihood estimation of effect modification parameters in survival analysis},
  author={Stitelman, Ori M and Wester, C William and De Gruttola, Victor and van der Laan, Mark J},
  journal={The international journal of biostatistics},
  volume={7},
  number={1},
  pages={0000102202155746791307},
  year={2011},
  publisher={De Gruyter}
}

@inproceedings{rindt2022survival,
  title={Survival regression with proper scoring rules and monotonic neural networks},
  author={Rindt, David and Hu, Robert and Steinsaltz, David and Sejdinovic, Dino},
  booktitle={International Conference on Artificial Intelligence and Statistics},
  pages={1190--1205},
  year={2022},
  organization={PMLR}
}

@article{curth2021survite,
  title={SurvITE: learning heterogeneous treatment effects from time-to-event data},
  author={Curth, Alicia and Lee, Changhee and van der Schaar, Mihaela},
  journal={Advances in Neural Information Processing Systems},
  volume={34},
  pages={26740--26753},
  year={2021}
}

@article{westling2023inference,
  title={Inference for treatment-specific survival curves using machine learning},
  author={Westling, Ted and Luedtke, Alex and Gilbert, Peter B and Carone, Marco},
  journal={Journal of the American Statistical Association},
  number={just-accepted},
  pages={1--26},
  year={2023},
  publisher={Taylor \& Francis}
}

@article{diaz2019statistical,
  title={Statistical inference for data-adaptive doubly robust estimators with survival outcomes},
  author={D{\'\i}az, Iv{\'a}n},
  journal={Statistics in medicine},
  volume={38},
  number={15},
  pages={2735--2748},
  year={2019},
  publisher={Wiley Online Library}
}

@article{cai2020one,
  title={One-step targeted maximum likelihood estimation for time-to-event outcomes},
  author={Cai, Weixin and van der Laan, Mark J},
  journal={Biometrics},
  volume={76},
  number={3},
  pages={722--733},
  year={2020},
  publisher={Wiley Online Library}
}

@inproceedings{hubbard2000nonparametric,
  title={Nonparametric locally efficient estimation of the treatment specific survival distribution with right censored data and covariates in observational studies},
  author={Hubbard, Alan E and Van Der Laan, Mark J and Robins, James M},
  booktitle={Statistical Models in Epidemiology, the Environment, and Clinical Trials},
  pages={135--177},
  year={2000},
  organization={Springer}
}

@article{bai2013doubly,
  title={Doubly-robust estimators of treatment-specific survival distributions in observational studies with stratified sampling},
  author={Bai, Xiaofei and Tsiatis, Anastasios A and O'Brien, Sean M},
  journal={Biometrics},
  volume={69},
  number={4},
  pages={830--839},
  year={2013},
  publisher={Wiley Online Library}
}

@article{bai2017optimal,
  title={Optimal treatment regimes for survival endpoints using a locally-efficient doubly-robust estimator from a classification perspective},
  author={Bai, Xiaofei and Tsiatis, Anastasios A and Lu, Wenbin and Song, Rui},
  journal={Lifetime data analysis},
  volume={23},
  pages={585--604},
  year={2017},
  publisher={Springer}
}

@article{cox1972regression,
  title = {Regression Models and Life-Tables},
  author = {Cox, David R},
  year = {1972},
  volume = {34},
  pages = {187--202},
  publisher = {{Wiley Online Library}},
  journal = {Journal of the Royal Statistical Society: Series B (Methodological)},
  number = {2}
}

@article{ishwaran2008random,
  title = {Random Survival Forests},
  author = {Ishwaran, Hemant and Kogalur, Udaya B and Blackstone, Eugene H and Lauer, Michael S and others},
  year = {2008},
  volume = {2},
  pages = {841--860},
  publisher = {{Institute of Mathematical Statistics}},
  journal = {The annals of applied statistics},
  number = {3}
}

@article{katzman2018deepsurv,
  title = {{{DeepSurv}}: Personalized Treatment Recommender System Using a {{Cox}} Proportional Hazards Deep Neural Network},
  author = {Katzman, Jared L and Shaham, Uri and Cloninger, Alexander and Bates, Jonathan and Jiang, Tingting and Kluger, Yuval},
  year = {2018},
  volume = {18},
  pages = {24},
  publisher = {{BioMed Central}},
  journal = {BMC medical research methodology},
  number = {1},
  pmcid = {PMC5828433}
}

@incollection{kleinbaum2012parametric,
  title = {Parametric Survival Models},
  booktitle = {Survival Analysis},
  author = {Kleinbaum, David G and Klein, Mitchel},
  year = {2012},
  pages = {289--361},
  publisher = {{Springer}}
}

@inproceedings{zhu2016deep,
  title = {Deep Convolutional Neural Network for Survival Analysis with Pathological Images},
  booktitle = {2016 {{IEEE}} International Conference on Bioinformatics and Biomedicine ({{BIBM}})},
  author = {Zhu, Xinliang and Yao, Jiawen and Huang, Junzhou},
  year = {2016},
  pages = {544--547},
  organization = {{IEEE}}
}

@article{kallus2016generalized,
  title={Generalized Optimal Matching Methods for Causal Inference},
  author={Kallus, Nathan},
  journal={arXiv preprint arXiv:1612.08321},
  year={2016}
}

@inproceedings{yoon2018ganite,
  title={GANITE: Estimation of individualized treatment effects using generative adversarial nets},
  author={Yoon, Jinsung and Jordon, James and Van Der Schaar, Mihaela},
  booktitle={International Conference on Learning Representations},
  year={2018}
}

@article{ching2018cox,
  title={Cox-nnet: an artificial neural network method for prognosis prediction of high-throughput omics data},
  author={Ching, Travers and Zhu, Xun and Garmire, Lana X},
  journal={PLoS Computational Biology},
  volume={14},
  number={4},
  pages={e1006076},
  year={2018},
  publisher={Public Library of Science San Francisco, CA USA}
}

@article{kvamme2019continuous,
  title={Continuous and discrete-time survival prediction with neural networks},
  author={Kvamme, H{\aa}vard and Borgan, {\O}rnulf},
  journal={arXiv preprint arXiv:1910.06724},
  year={2019}
}

@inproceedings{hu2021transformer,
  title={Transformer-based deep survival analysis},
  author={Hu, Shi and Fridgeirsson, Egill and van Wingen, Guido and Welling, Max},
  booktitle={Survival Prediction-Algorithms, Challenges and Applications},
  pages={132--148},
  year={2021},
  organization={PMLR}
}

@article{robins1992recovery,
  title={Recovery of information and adjustment for dependent censoring using surrogate markers},
  author={Robins, James M and Rotnitzky, Andrea},
  journal={AIDS epidemiology: methodological issues},
  pages={297--331},
  year={1992},
  publisher={Springer}
}

@book{van2000asymptotic,
  title={Asymptotic statistics},
  author={Van der Vaart, Aad W},
  volume={3},
  year={2000},
  publisher={Cambridge university press}
}

@book{bickel1993efficient,
  title={Efficient and adaptive estimation for semiparametric models},
  author={Bickel, Peter J and Klaassen, Chris AJ and Bickel, Peter J and Ritov, Ya’acov and Klaassen, J and Wellner, Jon A and Ritov, YA'Acov},
  volume={4},
  year={1993},
  publisher={Springer}
}

@article{zheng2010asymptotic,
  title={Asymptotic theory for cross-validated targeted maximum likelihood estimation},
  author={Zheng, Wenjing and Van Der Laan, Mark J},
  year={2010},
  publisher={bepress}
}

@article{nickl2007bracketing,
  title={Bracketing metric entropy rates and empirical central limit theorems for function classes of Besov-and Sobolev-type},
  author={Nickl, Richard and P{\"o}tscher, Benedikt M},
  journal={Journal of Theoretical Probability},
  volume={20},
  pages={177--199},
  year={2007},
  publisher={Springer}
}

@article{zhou2002covering,
  title={The covering number in learning theory},
  author={Zhou, Ding-Xuan},
  journal={Journal of Complexity},
  volume={18},
  number={3},
  pages={739--767},
  year={2002},
  publisher={Elsevier}
}

@article{tsiatis2006semiparametric,
  title={Semiparametric theory and missing data},
  author={Tsiatis, Anastasios A},
  year={2006},
  publisher={Springer}
}

@misc{chernozhukov2018double,
  title={Double/debiased machine learning for treatment and structural parameters},
  author={Chernozhukov, Victor and Chetverikov, Denis and Demirer, Mert and Duflo, Esther and Hansen, Christian and Newey, Whitney and Robins, James},
  year={2018},
  publisher={Oxford University Press Oxford, UK}
}
\newpage
\appendix
\onecolumn
\section{Description of implementation, datasets and additional results}
\label{sec:experiments_details}
\subsection{Implementation}
\label{sec:implementation}
Throughout, we use the RBF kernel with length scale 10. For all methods considered, we need to train a hazard estimator for time-to-event (the survival function can then be constructed from the hazard). We use the discrete \textit{logistic-hazard} model \citep{kvamme2019continuous} with the mean negative log-likelihood loss parameterized by the hazard function as:
\begin{equation*}
L(\{O\}_{i=1}^n; \lambda) = -\frac{1}{n}\sum_{i=1}^{n}\sum_{u \le T_i}[Y_u \log \lambda_u(X_i, A_i) + (1 - Y_u)\log (1 - \lambda_u(X_i, A_i))]
\end{equation*}
where $Y_u = \one(E_i=1, T_i=u)$. This loss breaks down into $|\mc T| \times 2$ independent binary cross-entropy losses for each $u \in \mc T$ and $a \in \{0, 1\}$: 
\begin{equation*}
\begin{aligned}
&L(\{O\}_{i=1}^n; \lambda) = \sum_{u \in \mc T} \sum_{a \in \{0, 1\}} L_u(\{O\}_{i=1}^n; \lambda_u(\cdot, a))
\end{aligned}
\end{equation*}
where
\begin{equation*}
 L_u(\{O\}_{i=1}^n; \lambda_u(\cdot, a)) \\= -\frac{1}{n}\sum_{i=1}^{n}\one(u \le T_i, A_i=a)[Y_u \log \lambda_u(X_i, a) \\+ (1 - Y_u)\log (1 - \lambda_u(X_i, a))] 
\end{equation*}

Therefore, we can fit $|\mc T| \times 2$ independent hazard models using kernel logistic regression.

\paragraph{Covariate-balancing}
The hazard estimate must be fit on a separate split of the data, therefore we divide the data into 2 folds. For each fold, we fit the hazard estimator on the other fold, then estimate the Riesz representer using the just obtained hazard estimate on the canonical fold. The result is one causal parameter estimate of the canonical fold, by solving the optimization problem in (\ref{eq:weight-optimization}). This gives us the time-to-event hazard and the Riesz estimates which we use to estimate the causal parameter of this canonical fold. Lastly, we obtain the final estimate by averaging the estimates across 2 folds.

\paragraph{Double robust estimation} We need to train a hazard estimator for time-to-censoring, which can be done similarly to the time-to-event case (but with events flipped), and a propensity estimator, for which we use a simple linear logistic regression (correctly specified in both datasets). We use cross-fitting \citep{chernozhukov2018double}, where we randomly divide the dataset into K folds (we used K=5 in our experiments). For each fold, we fit the time-to-event/censoring hazard and propensity estimators on the remaining folds and obtain their estimates on the canonical fold. Using the time-to-censoring and the propensity estimates, we obtain the Riesz estimate using \eqref{eq:weight-optimization}. Lastly we obtain the causal parameter estimate using \eqref{eq:the-generic-estimator}.

\subsection{The weight optimization problem}
\label{sec:weight-optimization-details}

We start by observing that we do not need to solve the optimization \eqref{eq:weight-optimization} all at once. It decomposes into
$t$ separate optimizations over timestep-specific weights $\gamma_{\cdot u}$ specific to a single timestep. 
\begin{lemma}
\label{thm:optimization-decomposition}
If the weights $\hat\gamma$ solve \eqref{eq:weight-optimization} then 
the $u$-timestep $\hat\gamma_{u}$ also solves the following $u$-timestep problem: 
\begin{equation}
\label{eq:u-timestep-problem}
\begin{aligned}
\hat \gamma_u &= \argmin_{\gamma_u \in \R^n} \left\{ I_u(\gamma_u)^2 + \frac{\sigma^2}{n} \sum_{i=1}^n \gamma_{iu}^2 \right\} \quad \text{where}\\
I_u(\gamma_u) &= \max_{\substack{\norm{h_u} \le 1}} 
\frac{1}{n}\sum_{i=1}^n \Big\{\hat r_{u}(X_i, a) h_{u}(X_i,a,1) - \gamma_{iu} h_{u}(X_i,A_i,G_i)\Big\}
\end{aligned}
\end{equation}
\end{lemma}

In the case that $\norm{\cdot}$ is the norm of an RKHS, we can use the representer theorem to further simplify this optimization. The representer theorem implies that it is sufficient to maximize over $h_u$ that can be written as $\sum_{j=1}^{2n} \alpha_j k(\cdot, Z_j)$ where $k$ is our space's kernel and $Z_j \in \{(X_1,A_1,G_1),\dots,(X_n,A_n,G_n),\dots,(X_1,a,1),\dots(X_n,a,1)\}$ the concatenation of $n$ observable data with natural values of $(A_i, G_i)$ and $n$ ideal data with counterfactual values $(a, 1)$. Making this substitution, we get the following characterization in terms of the $2n \times 2n$ kernel matrix $K$ satisfying $K_{ij} = k(Z_i,Z_j)$.
\begin{equation*}
    I_u(\gamma_u) = \max_{\alpha^T K \alpha \le 1} \left\{ \alpha^T K v_u \right\}
\end{equation*}
where $v_u = [-\gamma_u^\top; \hat r_u^\top]^\top$ where $\gamma_u, \hat r_u$ are vectors in $\R^n$ and $\hat r_{iu} = \hat r_u(X_i, a)$ and $[;]$ is vector concatenation. The maximum is achieved at $\alpha = v_u / \|K^{1/2}v_u\|$ and $I_u(\gamma_u)$ is:
\begin{equation*}
I_u(\gamma_u) = \|K^{1/2} v_u\|
\end{equation*}
Replacing into the outer minimization problem:
\begin{align*}
\hat \gamma_u &= \argmin_{\gamma_u \in \R^n} \Big\{ v_u^T K v_u
+ \frac{\sigma^2}{n} \sum_{i=1}^n \gamma_{iu}^2 \Big\}
\end{align*}
This quadratic-programming problem can be solved efficiently by most convex solvers, in particular we chose \texttt{cvxpy}. Now we show why we can decompose the original problem this way.

\begin{proof}[Proof of lemma \ref{thm:optimization-decomposition}] 
Let $L_u(\gamma_u, h_u)$ be defined as follows.
\begin{equation*}
    L_u(\gamma_u, h_u)= \frac{1}{n}\sum_{i=1}^n \Big\{\hat r_{iu} h_{u}(X_i,a,1) - \gamma_{iu} h_{u}(X_i,A_i,G_i)\Big\}
\end{equation*}
In terms of this function,
\begin{align*}
I(\gamma) &\stackrel{def}{=} \max_{\sum_u \smallnorm{h_u}^2 \le 1} \sum_{u} L_u(\gamma_u, h_u)\\
&=\max_{\substack{\beta \in \R^t \\ \sum_{u}\beta_u^2 \le 1}} \sum_{u} \max_{\|h_u\| \le \beta_u} L_u(\gamma_u, h_u) \\
&=\max_{\sum_{u}\beta_u^2 \le 1} \sum_{u}  \max_{\|h_u\| \le 1} L_u(\gamma_u, \beta_u h_u)\\
&=\max_{\sum_{u}\beta_u^2 \le 1} \sum_{u}  \beta_u \max_{\|h_u\| \le 1} L_u(\gamma_u, h_u)\\
&=\max_{\sum_{u}\beta_u^2 \le 1} \sum_{u}\beta_u  I_u(\gamma_u)\\
&= \sqrt{\sum_u I_u^2(\gamma_u)}.
\end{align*}
The last equality is due to Cauchy-Schwarz inequality, with equality achievable by setting $\beta_u$ proportional to $I_u(\gamma_u)$. 
If we substitute this expression into our original optimization \eqref{eq:weight-optimization}, we get:
{\allowdisplaybreaks\begin{equation*}
\begin{aligned}
\hat\gamma &= \operatorname*{argmin}_{\gamma \in \R^{n|\mc T|}} \Big\{ \sum_{u \le t} I_u(\gamma_u)^2 + \frac{\sigma^2}{n}\sum_{i=1}^n \sum_{u \le t}\gamma_{iu}^2 \Big\}\\
&= \operatorname*{argmin}_{\gamma \in \R^{n|\mc T|}} \Big\{ \sum_{u \le t} \Big( I_u(\gamma_u)^2 + \frac{\sigma^2}{n}\sum_{i=1}^n \gamma_{iu}^2 \Big) \Big\} \\
&= \sum_{u \le t} \operatorname*{argmin}_{\gamma \in \R^{n|\mc T|}} \Big( I_u(\gamma_u)^2 + \frac{\sigma^2}{n}\sum_{i=1}^n \gamma_{iu}^2 \Big)\\
\end{aligned}
\end{equation*}}
as each term in the sum over $u$ is a function of an individual $\gamma_u$, we can therefore solve a separate minimization sub-problem for each $u$:
\begin{equation*}
\hat \gamma_u = \argmin_{\gamma_u \in \R^n} \left\{ I_u(\gamma_u)^2 
+ \frac{\sigma^2}{n} \sum_{i=1}^n \gamma_{iu}^2 \right\}
\end{equation*}
proving our claim.
\end{proof}

\begin{remark}
If we choose the kernel $k$ to be the product kernel $k((x_1,a_1,g_1), (x_2,a_2,g_2)) = k_x(x_1, x_2) \one(a_1=a_2) \one(g_1=g_2)$, then it can be shown that $\gamma_{iu} = 0$ whenever $\one(A_i= a, G^u_i=1) = 0$, and we need only model $h(x)$ instead of $h(x,a,g)$. Therefore problem \ref{eq:u-timestep-problem} is also equivalent to the following problem with an optimal solution $\hat \omega$ that satisfies $\hat \gamma_{iu} = \hat r_{iu} \one(A_i=a, G^u_i=1)\hat \omega_{iu}$
\begin{equation}
\begin{aligned}
\hat \omega_u &= \argmin_{\omega_u \in \R^n} \left\{ (\hat r_{u}(1 - \mathcal{I}_u \odot \omega_u))^\top K_x (\hat r_{u}(1 - \mathcal{I}_u \odot \omega_u)) + \frac{\sigma^2}{n} \sum_{i=1}^n \one(A_i=a, G^u_i=1)\omega_{iu}^2 \right\}
\end{aligned}
\end{equation}
where $K_{x,ij} = k_x(X_i, X_j)$ the kernel between covariates only, $\mathcal{I}_{iu} = \one(A_i=a, G^u_i=1)$ and $\odot$ is the element-wise product. This formulation used by most covariate-balancing works e.g. \citet{kallus2022optimal, ben2021balancing} is conceptually simpler and slightly easier to implement, and it shows explicitly that we are optimizing for the inverse probability weight by $\omega$.
\end{remark}

\newpage
% \input{appendix/glossary}
% \newpage
\section{Identification and Proof of Supporting Lemmas}
\label{sec:proof2122}

\begin{proof}[Proof of Lemma \ref{lem:random_censoring}]
\begin{align*}
\lambda_t(x) &= P(E=1, \td T=t | X=x) / P(\td T \ge t | X=x)\\
&= P(T=t, C\ge t|X=x) / P(\td T \ge t | X=x)\\
&= P(T=t|X=x) P(C\ge t|X=x) / P(\td T \ge t|X=x)\\
&= \frac{P(T=t|X=x)}{P(T\ge t|X=x)} \frac{P(T\ge t|X=x)P(C\ge t|X=x)}{P(\td T \ge t|X=x)}\\
&= h_t(x)
\end{align*}
and $H_t(x) = P(\td T > t|X=x) = P(T > t, C > t|X=x) = P(T> t|X=x)P(C> t|X=x) = S_t(x)G_t(x)$
\end{proof}

\paragraph{Identification} To identify the counterfactual survival function using the observable data, similar to \citet{hubbard2000nonparametric, bai2013doubly,bai2017optimal,westling2023inference,diaz2019statistical, cai2020one}, we require the following testable and untestable assumptions: %\khiem{need to be explained more carefully}
\begin{enumerate}
\item (A1) $T(a), C(a) \perp A \vert X$ for each $a \in \{0, 1\}$.
\item (A2) $T(a) \perp C(a) \vert A=a, X$ for each $a \in \{0, 1\}$.
\item (A3) $P(A=a|X) > 0$ almost surely.
\item (A4) $P(C(a) \geq t |X) > 0 $ positivity (censoring),
\end{enumerate}

in addition to consistency and non-interference \cite{imbens2015causal}. (A1), also referred to as no unmeasured confounders, selection on observables, exogeneity, and conditional independence, asserts that the potential outcomes/potential censoring times, and treatment assignment are independent given confounders. This assumption implies that all relevant information regarding treatment assignment and follow-up censoring times is captured in the available data. (A2) states that the potential follow-up and censoring times are independent given the treatment and confounders. (A3) and (A4) assert that the probability of receiving or not receiving treatment, as well as the probability of censoring, given confounders, is greater than zero. 
% Finally, (A6) and (A7) state that the observed treatment corresponds to the actual treatment received, ensuring consistency between the observed and true treatment assignments.

\begin{proof}[Proof of Proposition \ref{prop:identification}]
\begin{align*}
P(T(a) > t) &= \E[P(T(a) > t) | X]\\
&= \E[P(T > t|X,A=a)]\\
&= \E[S_t(X,a)]\\
&= \E\left[\prod_{u \le t}\left(1- h_u(X,a) \right)\right]\\
&= \E\left[\prod_{u \le t}\left( 1- \lambda_u(X,a) \right)\right]\\
\end{align*}
where we used iterated expectation in the first equation, (A1) in the second equation, (A2) in the forth equation, and Lemma~\ref{lem:random_censoring} in the last equation.
\end{proof}
\newpage

\section{Proof of theorem \ref{thm:efficiency}}
We first recall and introduce additional notations for this section. 
Denote $Z = (X, A, G)$ the conditioning structure of the hazard function $\lambda$, and $Y^u = \one(E = 1, \td T = u)$. 
As $r$ and $S$ are functions of $\lambda$, we use $\hat r$ and $\hat S$ as functions of $\hat \lambda$ to denote the estimator counterparts.
We drop the superscripts $a, t$ of $\psi^{a,t}$ since they are not relevant in the proof.  
We use $\psi = \psi(\lambda)$ for the parameter of interest $\E[S_t(X, a)]$, $\psi_n(\lambda)$ for the sample analog $\frac1n\sum_{i=1}^n S_t(X_i, a)$. As $S$ and $\lambda$ are one-to-one, $\psi(\hat \lambda)$ would be $\E[\hat S_t(X,a)]$. We use $\hat\psi$ to denote our estimator, i.e.,
\begin{equation}
\hat \psi = \psi_n(\hat\lambda) + \frac{1}{n}\sum_{i=1}^n  \sum_{u\le t} \hat\gamma_{iu} \left\{ Y^u_i - \hat\lambda_u(Z_i) \right\}.
\end{equation}

Let's start by recalling what we are proving
\begin{equation}
\label{eq:claim-appendix}
\hat\psi - \psi = \frac{1}{n}\sum_{i=1}^n \phi(\lambda)(O_i) + o_p(n^{-1/2}) 
\end{equation}
where
\begin{align*}
\phi(\lambda)(O) &= \left\{S_t(X,a) - \psi\right\} + \sum_{u\le t}\gamma_u(Z) \left\{Y^u - \lambda_u(Z)\right\}.
\end{align*}
where $\gamma_u(Z) = \one(A=a, G^u=1)r_u(X,a) \omega(X,a)$ and $\omega_u(X,a) = \frac{1}{H_{u-}(X,a) \pi(X,a)}$ and $r_u(X,a) = - S_t(X,a)\frac{S_{u-}(X,a)}{S_u(X,a)}$. To do this, we will work with this error decomposition.
\begin{equation}
\begin{aligned} 
\hat \psi - \psi(\lambda)
&= \psi_n(\hat\lambda) + \frac{1}{n}\sum_{i=1}^n  \sum_{u\le t} \hat\gamma_{iu} \{ Y_{iu} - \hat\lambda_u(Z_i)\} - \psi(\lambda) 
\\
&= \left\{\psi_n(\hat\lambda) - \psi(\hat\lambda)\right\} + \frac{1}{n}\sum_{i=1}^n  \sum_{u\le t} \hat\gamma_{iu}\Big[ \left\{ Y_{iu} - \hat\lambda_u(Z_i) \right\} -  d\psi(\hat\lambda)(\lambda - \hat\lambda)\Big] \\
& \qquad + \left\{ \psi(\hat\lambda) + d\psi(\hat\lambda)(\lambda - \hat\lambda) - \psi(\lambda) \right\}.
\end{aligned}
\end{equation}

We prove \eqref{eq:claim-appendix} in three steps. Throughout, we will work conditionally on the auxilliary sample used to estimate $\hat \lambda$, so we can act as if it is a deterministic function. This will imply that our claim holds where $o_p$ refers to probability conditional on the auxilliary sample
and therefore also that it holds where $o_p$ refers to unconditional probability. 

\paragraph{Step 1.} 
The third term in this decomposition, the error of our linearization of $\psi$ around $\hat\lambda$,
is $o_p(n^{-1/2})$. Lemma~\ref{lemma:linearization} below shows that this is implied by our assumption that
$\hat \lambda$ converges $\lambda$ at faster-than-fourth-root rate.

\paragraph{Step 2.}
Lemma~\ref{lemma:negligible-imbalance} below concludes that the second term in our decomposition has the following asymptotic approximation.
\begin{equation*}
\frac{1}{n}\sum_{i=1}^n  \sum_{u \le t}\Big[ \gamma_u(Z_i) \{ Y^u_i - \lambda_u(Z_i) \}\Big] + o_p(n^{-1/2})
\end{equation*}
\paragraph{Step 3.}
The sum of the first term in our decomposition and the non-negligible part of \textbf{Step 2} is 
\begin{equation}
\begin{aligned}
&\psi_n(\hat \lambda) - \psi(\hat \lambda) + \frac{1}{n}\sum_{i=1}^n \sum_{u \le t} \Big[\gamma_u(Z_i)
\{ Y^u_i - \lambda_u(Z_i) \} \Big] \stackrel{def}{=}\frac{1}{n}\sum_{i=1}^n\hat\phi(O_i) 
\end{aligned}
\end{equation}
To complete our proof of the claim \eqref{eq:claim-appendix}, we show 
\[ \frac{1}{n}\sum_{i=1}^n \left\{\hat\phi(O_i) - \phi(O_i) \right\} = o_p(n^{1/2}). \] 
Because  this is an average of independent and identically distributed terms with mean zero, its mean square is $1/n$ times the variance of an individual term; thus, all we have to do is show that the variance of $(\hat\phi-\phi)(O_i)$ goes to zero. In Lemma~\ref{lemma:eif-continuity} below, we show that this is a consequence of the convergence of $\hat\lambda \to \lambda$.

We conclude by stating and proving our lemmas.

\begin{lemma} For all $t \in \mc T$,
\begin{align*}
&\hat S_t(X,a) - S_t(X,a) = \sum_{u \le t}\hat S_t(X,a) \frac{S_{u-}(X,a)}{\hat S_{u-}(X,a)}\frac{\hat S_{u-}(X,a)}{\hat S_u(X,a)}\left(\lambda_u(X,a) - \hat \lambda_u(X,a)\right).
\end{align*}
Furthermore,
\[
\norm{S_t(X,a) - \hat S_t(X,a)}_{L_2(P)} \le |\mc T| \max_{u \le t}\norm{\lambda_u(X,a) - \hat \lambda_u(X,a)}_{L_2(P)}
\]
\label{lem:rate_S}
\end{lemma}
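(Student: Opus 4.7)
The plan is to prove the identity by telescoping the difference of two products $\hat S_t=\prod_{u\le t}(1-\hat\lambda_u)$ and $S_t=\prod_{u\le t}(1-\lambda_u)$, and then derive the $L_2$ bound from the identity via the triangle inequality together with the boundedness of survival-function ratios. Everything is pointwise in $(X,a)$, which I will suppress in the notation; the single subtle point is choosing the \emph{right} telescope among several natural ones so that what comes out matches the form in the statement verbatim.

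For the first claim, I would introduce the hybrid product $Q_u=\prod_{v\le u}(1-\lambda_v)\prod_{u<v\le t}(1-\hat\lambda_v)$, which interpolates between $Q_0=\hat S_t$ and $Q_t=S_t$. Then $\hat S_t-S_t=Q_0-Q_t=\sum_{u\le t}(Q_{u-1}-Q_u)$, and each consecutive difference isolates the factor $(1-\hat\lambda_u)-(1-\lambda_u)=\lambda_u-\hat\lambda_u$, giving
\[
\hat S_t-S_t=\sum_{u\le t}\Bigl(\prod_{v<u}(1-\lambda_v)\Bigr)(\lambda_u-\hat\lambda_u)\Bigl(\prod_{u<v\le t}(1-\hat\lambda_v)\Bigr)=\sum_{u\le t} S_{u-}\,\frac{\hat S_t}{\hat S_u}\,(\lambda_u-\hat\lambda_u).
\]
To match the exact form in the statement, I then multiply and divide the summand by $\hat S_{u-}$, rewriting $S_{u-}\hat S_t/\hat S_u$ as $\hat S_t\cdot(S_{u-}/\hat S_{u-})\cdot(\hat S_{u-}/\hat S_u)$. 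The other natural telescope (using the products of $\hat\lambda$ on the left and $\lambda$ on the right) gives a different but algebraically valid decomposition, which is why picking the right one matters here; a quick check at $t=2$, $u=1$ confirms that the telescope I chose is the correct one.

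For the $L_2$ bound, I would take absolute values in the identity and use that each factor $\hat S_t/\hat S_u=\prod_{u<v\le t}(1-\hat\lambda_v)$ and $S_{u-}=\prod_{v<u}(1-\lambda_v)$ lies in $[0,1]$ (hazards are probabilities, so $1-\hat\lambda_v,\,1-\lambda_v\in[0,1]$). Hence the multiplicative prefactor $\hat S_t (S_{u-}/\hat S_{u-})(\hat S_{u-}/\hat S_u)$ is bounded by $1$ pointwise, and
\[
|\hat S_t(X,a)-S_t(X,a)|\le \sum_{u\le t}|\lambda_u(X,a)-\hat\lambda_u(X,a)|.
\]
Taking $L_2(P)$ norms, applying the triangle inequality on the right, and bounding each summand by the maximum over $u\le t$ of $\smallnorm{\lambda_u(\cdot,a)-\hat\lambda_u(\cdot,a)}_{L_2(P)}$ gives the claimed inequality, since the number of terms in the sum is at most $|\mc T|$.

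The main (mild) obstacle is the cancellation of $\hat S_{u-}$ inside the sum: the statement displays the ratio $S_{u-}/\hat S_{u-}$ explicitly even though it cancels with the adjacent factor, so one must confirm this is cosmetic and not a substantive claim about $\hat S_{u-}$ being nonzero. As long as $\hat\lambda_u<1$ (so that $\hat S_{u-}>0$), the manipulation is valid; if $\hat\lambda_u=1$ for some $u$, one reads the identity in its pre-cancellation form $\hat S_t\,S_{u-}/\hat S_u=S_{u-}\prod_{u<v\le t}(1-\hat\lambda_v)$, which is unambiguous, and the $L_2$ bound is unaffected.
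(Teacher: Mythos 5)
Your proof is correct and uses essentially the same telescoping argument as the paper, differing only in presentation: you interpolate between $\hat S_t$ and $S_t$ via the hybrid products $Q_u$, while the paper factors out $\hat S_t$ and telescopes the ratios $S_{u-}/\hat S_{u-}-S_u/\hat S_u$, and both arrive at the same summand and the same bound $0\le \hat S_t S_{u-}/\hat S_u \le 1$. Your observation that the displayed $S_{u-}/\hat S_{u-}$ factor is cosmetic (it cancels) is correct; the paper retains it in this form because Lemma~\ref{lemma:linearization} subsequently expands around $S_{u-}/\hat S_{u-}\approx 1$, so the uncancelled factorization is the one that gets reused.
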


\begin{lemma} 
\label{lemma:linearization}
Let $\hat \lambda$ and $\lambda$ be two hazards and $\hat S$ and $S$ the associated survival functions. Then the functional $\psi(h)$ evaluated at $h = \lambda$ has the following expansion:
\begin{equation}
\begin{aligned} 
\psi(\lambda) &= \psi(\hat\lambda) + \E\left[ -\sum_{u \le t} \hat S_t(X,a) \frac{\hat S_{u-}(X,a)}{\hat S_{u}(X,a)} \Big( \lambda_u(X,a) - \hat \lambda_u(X,a) \Big) \right]
\\&\qquad+\E\left[-\sum_{u \le t} \frac{\hat S_t(X,a)}{\hat S_u(X,a)} \Big( S_{u-}(X,a) - \hat S_{u-}(X,a) \Big) \Big( \lambda_u(X,a) - \hat \lambda_u(X,a) \Big) \right]
\end{aligned}
\label{eq:expansion-full}
\end{equation}

Furthermore, under Assumption \ref{ass:rate} and Lemma \ref{lem:rate_S}, the second term is $o_p(n^{-1/2})$, therefore: 
\begin{align*}
\psi(\hat\lambda) &= \psi(\lambda) 
\\&\quad+ \E\left[ \sum_{u \le t} \hat S_t(X,a) \frac{\hat S_{u-}(X,a)}{\hat S_{u}(X,a)} \Big( \lambda_u(X,a) - \hat \lambda_u(X,a) \Big) \right] 
\\&\quad+ o_p(n^{-1/2})
\end{align*}

\end{lemma}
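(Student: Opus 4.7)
The plan is to derive both claims directly from Lemma~\ref{lem:rate_S}: the displayed identity is an algebraic consequence of a single splitting trick, and the ``furthermore'' clause is a one-line Cauchy--Schwarz bound once that identity is in hand. There is no deep obstacle; the argument is careful bookkeeping.

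First, I would start from the exact telescoping expansion in Lemma~\ref{lem:rate_S}, negated to give
\begin{equation*}
S_t(X,a) - \hat S_t(X,a) = -\sum_{u \le t} \hat S_t(X,a) \frac{S_{u-}(X,a)}{\hat S_{u-}(X,a)} \frac{\hat S_{u-}(X,a)}{\hat S_u(X,a)} (\lambda_u(X,a) - \hat\lambda_u(X,a)).
\end{equation*}
Splitting the ratio as $S_{u-}/\hat S_{u-} = 1 + (S_{u-} - \hat S_{u-})/\hat S_{u-}$ and cancelling $\hat S_{u-}$ in the cross term yields
\begin{equation*}
S_t - \hat S_t = -\sum_{u \le t} \hat S_t \frac{\hat S_{u-}}{\hat S_u}(\lambda_u - \hat\lambda_u) - \sum_{u \le t} \frac{\hat S_t}{\hat S_u}(S_{u-} - \hat S_{u-})(\lambda_u - \hat\lambda_u).
\end{equation*}
Taking expectation over $X$ and using $\psi(\lambda) - \psi(\hat\lambda) = \E[S_t - \hat S_t]$ reproduces exactly \eqref{eq:expansion-full}.

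Second, for the remainder bound I would exploit monotonicity: $\hat S_t \le \hat S_u$ pointwise (survival functions are non-increasing in $t$), so $\hat S_t/\hat S_u \in [0,1]$. Each summand in the second bracketed term then satisfies, by Cauchy--Schwarz,
\begin{equation*}
\left| \E\left[ \frac{\hat S_t}{\hat S_u} (S_{u-} - \hat S_{u-})(\lambda_u - \hat\lambda_u) \right] \right| \le \|S_{u-} - \hat S_{u-}\|_{L_2(P)} \, \|\lambda_u - \hat\lambda_u\|_{L_2(P)}.
\end{equation*}
By the $L_2$ estimate in Lemma~\ref{lem:rate_S}, the first factor is at most $|\mc T| \max_{v \le t}\|\lambda_v - \hat\lambda_v\|_{L_2(P)}$; Assumption~\ref{ass:rate} makes both this and $\|\lambda_u - \hat\lambda_u\|_{L_2(P)}$ equal to $o_p(n^{-1/4})$. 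Their product is $o_p(n^{-1/2})$, and summing over the finite set $\{u \le t\}$ preserves the rate.

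The only point that requires care is ensuring $\hat S_u > 0$ on the support of $X$ so the ratios are well-defined; this is inherited from any sensible estimator that respects $\hat\lambda_u < 1$. The content-carrying piece is really Lemma~\ref{lem:rate_S}, which I would prove by telescoping the product $\prod_{v \le t}(1 - \bar\lambda_v)$ along an interpolating sequence taking $\bar\lambda_v = \hat\lambda_v$ for $v \le u$ and $\bar\lambda_v = \lambda_v$ for $v > u$; the $L_2$ bound then follows from the pointwise estimate $|\hat S_t - S_t| \le \sum_u |\lambda_u - \hat\lambda_u|$, which uses only that $|1 - \hat\lambda_v|, |1 - \lambda_v| \in [0,1]$, followed by the triangle inequality in $L_2(P)$.
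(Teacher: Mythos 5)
Your proof follows the paper's approach almost exactly: split $S_{u-}/\hat S_{u-} = 1 + (S_{u-}-\hat S_{u-})/\hat S_{u-}$ inside the telescoping identity of Lemma~\ref{lem:rate_S}, cancel $\hat S_{u-}$ in the cross term, take expectations to get \eqref{eq:expansion-full}, and then bound the remainder term by Cauchy--Schwarz combined with the $L_2$ estimate for $\hat S - S$ and Assumption~\ref{ass:rate}. The identity derivation is identical to the paper's.

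There is one place where your write-up is actually cleaner than the paper's, and it is worth noting. For the remainder bound you correctly pass from $\left|\E[hfg]\right|$ with $|h|\le 1$ to $\E|fg|\le\norm{f}_{L_2(P)}\norm{g}_{L_2(P)}$, which is a legitimate application of Cauchy--Schwarz. The paper instead first bounds $\left|\E[\,\cdot\,]\right|\le\norm{\cdot}_{L_2(P)}$, distributes the $L_2$ norm over the sum, and then asserts $\norm{fg}_{L_2(P)}\le\norm{f}_{L_2(P)}\norm{g}_{L_2(P)}$. That last step is not a correct application of Cauchy--Schwarz (it would give $\norm{f}_{L_4}\norm{g}_{L_4}$, and $\norm{fg}_{L_2}\le\norm{f}_{L_2}\norm{g}_{L_2}$ fails, e.g., when $f=g$). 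Your route through $\E|fg|$ avoids this issue entirely and reaches the same $|\mc T|^2\max_u\norm{\lambda_u-\hat\lambda_u}_{L_2(P)}^2 = o_p(n^{-1/2})$ conclusion rigorously. In short: same decomposition, same key lemma, but you have quietly repaired a small inequality slip in the paper's remainder bound.
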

\begin{remark}
\label{remark:riesz-rep-explicit}
Since the second term in the expansion of $\psi(\lambda)$, shown in \eqref{eq:expansion-full}, is linear in $(\lambda-\hat\lambda)$ and the third is higher order, this lemma shows that the second term is indeed the derivative $d\psi(\hat\lambda)(\lambda - \hat\lambda)$ that appears in \eqref{eq:expansion}.
\end{remark}

\begin{lemma}
\label{lemma:negligible-imbalance}
Suppose the assumptions of Theorem~\ref{thm:efficiency} are satisfied.
\begin{align*}
&\frac{1}{n}\sum_{i=1}^n  \sum_{u \le t}\hat\gamma_{iu} \{ Y^u_i - \hat\lambda_u(Z_i)\} - d\psi(\hat \lambda)(\lambda- \hat\lambda)\\
&= \frac{1}{n}\sum_{i=1}^n  \sum_{u \le t}\Big[\one(A_i=a, G^u_i=1)\hat r_u(X_i,a) \omega_u(X_i,a) \{ Y^u_i - \lambda_u(Z_i) \}\Big] + o_p(n^{-1/2}).
\end{align*}
This remains true if Assumption~\ref{ass:rate}'s $o_p(n^{-1/4})$ rate assumption is weakened to an $o_p(1)$ rate.
\end{lemma}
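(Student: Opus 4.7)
The plan is to reduce the claim to two independent sub-problems via an add-and-subtract argument centered on the \emph{oracle Riesz representer}
\[
\gamma^\star_{iu} \;=\; \hat r_u(X_i,a)\,\one(A_i=a, G_{iu}=1)\,\omega_u(X_i,a),
\]
which uses the data-dependent $\hat r$ but pairs it with the population inverse weights $\omega_u$. Writing $\hat h_u = \lambda_u - \hat\lambda_u$ and using the identity $Y_{iu}-\hat\lambda_u(Z_i) = (Y_{iu}-\lambda_u(Z_i)) + \hat h_u(Z_i)$, the target difference (left-hand side minus the main term of the right-hand side) decomposes into a ``weight-error times mean-zero residual'' piece
\[
\mathrm{(I)} \;=\; \tfrac{1}{n}\sum_i\sum_u \bigl(\hat\gamma_{iu} - \gamma^\star_{iu}\bigr)\bigl(Y_{iu}-\lambda_u(Z_i)\bigr),
\]
and a ``balance-plus-concentration'' piece
\[
\mathrm{(II)} \;=\; \tfrac{1}{n}\sum_i\sum_u \hat\gamma_{iu}\hat h_u(Z_i) \;-\; d\psi(\hat\lambda)(\hat h).
\]
I would show each is $o_p(n^{-1/2})$.

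For $\mathrm{(II)}$, the indicator baked into $\hat\gamma_{iu}$ forces evaluation at $(X_i,a,1)$, so $\mathrm{(II)}$ equals the in-sample balance gap $\smallinner{\hat\gamma}{\hat h}_n - d\psi_n(\hat\lambda)(\hat h)$ of the optimization \eqref{eq:weight-optimization} applied to the specific direction $\hat h$, plus the concentration correction $d\psi_n(\hat\lambda)(\hat h) - d\psi(\hat\lambda)(\hat h)$. The first piece is at most $I(\hat\omega)\smallnorm{\hat h}$ by Cauchy--Schwarz against the unit ball of $\smallnorm{\cdot}$; I would control $I(\hat\omega)$ by comparing the optimal objective value attained by $\hat\omega$ to the value attained by plugging in the oracle $\omega_u$, using Assumption~\ref{ass:overlap} to bound the oracle's variance penalty and Assumption~\ref{ass:donskerity} (a maximal inequality over the Donsker unit ball $\mathcal{B}$) to bound the oracle balance error. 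Combined with $\smallnorm{\hat h}=O_p(1)$ from Assumption~\ref{ass:strongnorm-bounded} and the $L^2$-consistency of $\hat\lambda$, this yields the required rate. The concentration correction is a mean-zero empirical-process fluctuation indexed by a bounded multiple of the Donsker class $\mathcal{B}$, and is $o_p(n^{-1/2})$ by stochastic equicontinuity once we condition on the auxiliary fold that produced $\hat\lambda$.

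For $\mathrm{(I)}$, the residuals $Y_{iu}-\lambda_u(Z_i)$ are conditionally mean-zero given $Z_i$. Both $\hat\gamma$ and $\gamma^\star$ share the factor $\hat r_u(X_i,a)\one(A_i=a,G_{iu}=1)$, so $\hat\gamma_{iu}-\gamma^\star_{iu}$ is supported on $\{A_i=a,G_{iu}=1\}$ and proportional to $\hat\omega_{iu}-\omega_u(X_i,a)$. Conditioning on the auxiliary fold, I would apply a stochastic equicontinuity bound for empirical processes indexed by a Donsker class against mean-zero noise: the ``furthermore'' clause of Assumption~\ref{ass:donskerity} ensures the relevant class (which incorporates the inverse-probability factor $1/P(A=a,\td T\ge u\mid X)$) remains Donsker under the weak overlap Assumption~\ref{ass:overlap}, and the $L^2$ consistency $\smallnorm{\hat\omega-\omega}_{L_2(P)}=o_p(1)$ that falls out of the oracle comparison in Step $\mathrm{(II)}$ furnishes the shrinking envelope needed to upgrade $O_p(n^{-1/2})$ to $o_p(n^{-1/2})$.

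The chief technical obstacle is obtaining the rate on $I(\hat\omega)$ and the $L^2$ convergence of the weights \emph{without} a pointwise lower bound on $P(A=a,\td T\ge u\mid X)$. Classical covariate-balancing arguments (e.g.\ \citet{hirshberg2021augmented}) rely on uniformly bounded inverse weights; here the oracle weights can blow up on a small covariate set, and it is precisely the ``furthermore'' clause of Assumption~\ref{ass:donskerity} that keeps the empirical-process brackets controlled. Threading the Donsker and maximal-inequality estimates through this weakened overlap regime, without inflating the variance bound, is where the proof departs substantively from the existing literature and requires the most care.
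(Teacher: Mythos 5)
Your high-level plan follows the same skeleton as the paper's proof, but the paper itself takes a shortcut you do not: rather than re-derive the balancing estimate's asymptotics, it rewrites the weight optimization (via Lemma~\ref{thm:optimization-decomposition}) so that it is recognizably an instance of the one in \citet{hirshberg2021augmented}, and then invokes \citet[Theorem 1]{hirshberg2021augmented} to get the whole claim in one stroke, with an added remark that since $\hat r_u$ varies with $n$ one needs a triangular-array version of that theorem (which the paper argues follows from the finite-sample result \citet[Theorem 2]{hirshberg2021augmented} together with the contraction principle). Your add-and-subtract decomposition into (I) and (II) is essentially the decomposition that sits inside that cited proof, so conceptually you are sketching the internals of the black box the paper cites; that is a legitimate alternate presentation.

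Two points where your sketch has real gaps, however. First, for term (I), a stochastic-equicontinuity argument "indexed by a Donsker class" does not apply directly to $\hat\omega_{iu}-\omega_u(X_i,a)$: the estimated weights $\hat\omega_{iu}$ are an $n$-vector of data-dependent numbers produced by the optimization, not evaluations of a function lying in some fixed class. You cannot treat them as an element of a Donsker class without first arguing that $\hat\omega$ is (close to) the evaluation of a function in a controllable class --- this is exactly the nontrivial structural step in \citet{hirshberg2021augmented}, typically done via the optimality conditions of the weight problem and the Rademacher-complexity fixed-point machinery their Theorem~2 is built on. Second, you never address the fact that the "target" being balanced, $\hat r_u(\cdot,a)$, is itself estimated and hence changes with $n$; the paper flags this as the reason a triangular-array version of the cited theorem is needed, and notes it relies on uniform-in-$n$ boundedness of $\hat r_u$ plus the contraction principle. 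Your sketch would need to account for both of these to be complete; as written, Step~(I) is the part that would fail if carried out literally.
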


\begin{lemma}
\label{lemma:eif-continuity}
Suppose our overlap assumption, Assumption~\ref{ass:overlap}, is satisfied. Then the influence function $\phi$ is mean-square continuous as a function of $\lambda$, i.e.,
\begin{align*} 
\phi(\lambda)(O_i)
&=\left\{ S_t(X_i,a) - \E \left[S_t(X,a)\right]\right\} 
\\&\qquad + \sum_{u\le t}\Big[r_u(X_i,a)  \one(A_i=a, G_{iu}=1)\omega_u(X_i,a)\left(\one(\td T_i=u,E_i=1) - \lambda_u(X_i,a)\right) \Big]
\end{align*}
satisfies $\E \{ \phi(\hat\lambda)-\phi(\lambda)\}^2 \to 0$ if $\hat\lambda$ and $\lambda$ are two hazards,
with corresponding survival curves $\hat S$ and $S$ and ratios $\hat r$ and $r$, that converge in the sense that
$\norm{\hat \lambda-\lambda}_{L_2(P)} \to 0$.
\end{lemma}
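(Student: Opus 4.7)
The plan is to decompose $\phi(\hat\lambda)(O)-\phi(\lambda)(O)$ into terms that isolate the three ways $\lambda$ enters the influence function---through the survival $S_t$, the ratio $r_u$, and the residual $\one(\td T=u,E=1)-\lambda_u$---and bound the $L_2(P)$ norm of each separately. Using the algebraic identity $\hat r_u(\one-\hat\lambda_u) - r_u(\one-\lambda_u) = \hat r_u(\lambda_u-\hat\lambda_u) + (\hat r_u-r_u)(\one-\lambda_u)$ term-by-term, I would write
\begin{align*}
\phi(\hat\lambda) - \phi(\lambda)
&= \big\{(\hat S_t(X,a)-S_t(X,a)) - \E[\hat S_t(X,a)-S_t(X,a)]\big\} \\
&\quad + \sum_{u\le t}\hat r_u(X,a)\,\one(A=a,G_u=1)\,\omega_u(X,a)\,\big(\lambda_u(X,a)-\hat\lambda_u(X,a)\big) \\
&\quad + \sum_{u\le t}\big(\hat r_u(X,a)-r_u(X,a)\big)\,\one(A=a,G_u=1)\,\omega_u(X,a)\,\big(\one(\td T=u,E=1)-\lambda_u(X,a)\big).
\end{align*}
The first term's $L_2(P)$ norm is at most $2\|\hat S_t(\cdot,a)-S_t(\cdot,a)\|_{L_2(P)}$, and this vanishes by Lemma~\ref{lem:rate_S} under the assumed $L_2(P)$ convergence of the hazards.

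For the remaining two terms I would use three ingredients: (i) the uniform bound $|r_u|,|\hat r_u|\le 1$, since each equals $\pm\prod_{v\ne u,\,v\le t}(1-\lambda_v)$ with each factor in $[0,1]$; (ii) the identity $\E[\one(A=a,G_u=1)\mid X] = \pi(X,a)H_{u-}(X,a) = 1/\omega_u(X,a)$, which upon conditioning on $X$ converts one power of $\one(A=a,G_u=1)\omega_u^2$ into $\omega_u$; and (iii) the telescoping bound $|\hat r_u-r_u| \le \sum_{v\le t}|\hat\lambda_v-\lambda_v|$, valid because both sides are products of factors in $[0,1]$. Combined with Cauchy-Schwarz and Minkowski, these reduce the mean square of each of the two remaining terms to a finite sum of expressions $\E[\omega_u(X,a)\,g(X)^2]$ with $|g|\le 1$ and $\|g\|_{L_2(P)}\to 0$ (namely $g=\lambda_u-\hat\lambda_u$ for the second term, and $g = \hat\lambda_v-\lambda_v$ for the third).

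The key obstacle, and the only place where care is needed, is handling $\E[\omega_u g^2]$ under the weak overlap condition $\E[\omega_u]<\infty$ rather than strong overlap $\omega_u\in L_\infty$. I would handle this by a standard truncation argument: for any $M>0$, since $g^2\le 1$,
\begin{equation*}
\E[\omega_u g^2] \le M\,\E[g^2] + \E[\omega_u\,\one(\omega_u>M)].
\end{equation*}
Given $\epsilon>0$, first choose $M$ so that the second summand is below $\epsilon/2$ (possible by the dominated convergence theorem applied to the integrable majorant $\omega_u$), then take $\|g\|_{L_2(P)}$ small enough that the first summand is below $\epsilon/2$. This delivers $\E[\omega_u g^2]\to 0$ as $\|\hat\lambda-\lambda\|_{L_2(P)}\to 0$, and combining the three resulting estimates via Minkowski completes the proof.
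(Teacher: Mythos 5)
Your proposal is correct and takes the same broad strategy as the paper (decompose $\phi(\hat\lambda)-\phi(\lambda)$, use the conditional expectation $\E[\one(A=a,G_u=1)\mid X]=1/\omega_u$ to kill one power of $\omega_u$, and exploit boundedness of $r_u$, $\hat r_u$, and the hazards), but it is more careful in two respects that are worth noting.

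First, you include an extra middle term $\sum_{u\le t}\hat r_u\,\one\,\omega_u\,(\lambda_u-\hat\lambda_u)$, because you interpret $\phi(\hat\lambda)$ as substituting $\hat\lambda$ everywhere, including in the residual $\one(\td T=u,E=1)-\hat\lambda_u$. The paper's proof keeps $\lambda_u$ fixed in the residual, so its decomposition has only two pieces. Your reading is the literal one given the displayed formula for $\phi(\lambda)$, and the added term causes no trouble---your bound $\E[\omega_u(\lambda_u-\hat\lambda_u)^2]\to 0$ handles it---so strictly you prove a slightly stronger statement.

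Second, and more importantly, your truncation argument for $\E[\omega_u g^2]$ under the weak overlap condition $\E[\omega_u]<\infty$ is the genuine content that the paper's proof glosses over. The paper reduces to $\E[(\hat r_u-r_u)^2\omega_u]\to 0$ and then says ``continue applying H\"older's inequality.'' H\"older with exponents $(\infty,1)$ would need $\|\hat r_u-r_u\|_\infty\to0$, which is not implied by $L_2(P)$ convergence of the hazards; H\"older with $(2,2)$ would need $\E[\omega_u^2]<\infty$, which is stronger than Assumption~\ref{ass:overlap}. Your splitting $\E[\omega_u g^2]\le M\E[g^2]+\E[\omega_u\one(\omega_u>M)]$ with $|g|\le1$, together with dominated convergence for the tail, is the right way to close this gap and makes the dependence on the stated overlap condition explicit. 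You also supply the useful pointwise telescoping bound $|\hat r_u-r_u|\le\sum_{v\le t}|\hat\lambda_v-\lambda_v|$, derived from the product representation $r_u=-\prod_{v\le t,\,v\ne u}(1-\lambda_v)$, which the paper leaves implicit behind a remark that $r_u$ is ``a continuous function of $S$.'' In short: same decomposition philosophy, but your version actually proves the lemma under the weak overlap assumption as stated.
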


\subsection{Lemma Proofs}

\begin{proof}[Proof of Lemma~\ref{lem:rate_S}]
For all $t \in \mc T$
\begin{align}
\begin{split}
&\hat S_t(X,a) - S_t(X,a) \\&= \hat S_t(X,a) \left(1 - \frac{S_t(X,a)}{\hat S_t(X,a)}\right)\\
&= \hat S_t(X,a)\sum_{u \le t} \left(\frac{S_{u-}(X,a)}{\hat S_{u-}(X,a)} - \frac{S_u(X,a)}{\hat S_u(X,a)}\right)\\
&= \hat S_t(X,a)\sum_{u \le t}\left[ \frac{S_{u-}(X,a)}{\hat S_{u-}(X,a)}\frac{\hat S_{u-}(X,a)}{\hat S_u(X,a)}\left(\frac{\hat S_u(X,a)}{\hat S_{u-}(X,a)} - \frac{S_u(X,a)}{S_{u-}(X,a)}\right)\right]\\
&= \hat S_t(X,a)\sum_{u \le t} \left[\frac{S_{u-}(X,a)}{\hat S_{u-}(X,a)}\frac{\hat S_{u-}(X,a)}{\hat S_u(X,a)}\left((1 - \hat \lambda_u(X,a)) - (1 - \lambda_u(X,a))\right)\right]\\
&= \sum_{u \le t}\hat S_t(X,a) \frac{S_{u-}(X,a)}{\hat S_{u-}(X,a)}\frac{\hat S_{u-}(X,a)}{\hat S_u(X,a)}\left(\lambda_u(X,a) - \hat \lambda_u(X,a)\right)
\end{split}
\end{align}
therefore
\begin{align*}
&\norm{S_t(X,a) - \hat S_t(X,a)}_{L_2(P)}\\
&=\left\|\sum_{u \le t}\hat S_t(X,a) \frac{S_{u-}(X,a)}{\hat S_{u-}(X,a)}\frac{\hat S_{u-}(X,a)}{\hat S_u(X,a)}
\left(\lambda_u(X,a) - \hat \lambda_u(X,a)\right)\right\|_{L_2(P)}\\
&\le |\mc T| \max_{u \le t}\left\|\hat S_t(X,a) \frac{S_{u-}(X,a)}{\hat S_{u-}(X,a)}\frac{\hat S_{u-}(X,a)}{\hat S_u(X,a)}\left(\lambda_u(X,a) - \hat \lambda_u(X,a)\right)\right\|_{L_2(P)}\\
&\le |\mc T| \max_{u \le t}\norm{\hat S_t(X,a) \frac{S_{u-}(X,a)}{\hat S_u(X,a)}\left(\lambda_u(X,a) - \hat \lambda_u(X,a)\right)}_{L_2(P)}\\&\le |\mc T| \max_{u \le t}\norm{\lambda_u(X,a) - \hat \lambda_u(X,a)}_{L_2(P)}
\end{align*}
since $0 \le \frac{\hat S_t(X,a)S_{u-}(X,a)}{\hat S_u(X,a)} \le \frac{\hat S_t(X,a)}{\hat S_u(X,a)} \le 1$.
\end{proof}

\begin{proof}[Proof of Lemma~\ref{lemma:linearization}]

We expand each term of the decomposition in Lemma \ref{lem:rate_S} around the approximation $S_{u-}(X,a) / \hat S_{u-}(X,a) \approx 1$,
\begin{align*}
&\hat S_t(X,a) \frac{S_{u-}(X,a)}{\hat S_{u-}(X,a)}\frac{\hat S_{u-}(X,a)}{\hat S_u(X,a)}\left(\lambda_u(X,a) - \hat \lambda_u(X,a)\right) \\
&= \hat S_t(X,a) \frac{\hat S_{u-}(X,a)}{\hat S_u(X,a)}\left(\lambda_u(X,a) - \hat\lambda_u(X,a)\right) \\
&\qquad+\hat S_t(X,a) \left( \frac{S_{u-}(X,a)}{\hat S_{u-}(X,a)} - 1 \right) \frac{\hat S_{u-}(X,a)}{\hat S_u(X,a)}\left(\lambda_u(X,a) - \hat\lambda_u(X,a)\right) \\ 
&= \hat S_t(X,a) \frac{\hat S_{u-}(X,a)}{\hat S_u(X,a)}\left(\lambda_u(X,a) - \hat\lambda_u(X,a)\right) \\
&\qquad+ \frac{\hat S_t(X,a)}{\hat S_u(X,a)} \left( S_{u-}(X,a) - \hat S_{u-}(X,a)\right)\left(\lambda_u(X,a) - \hat\lambda_u(X,a)\right)
\end{align*}

therefore
\begin{equation}
\begin{aligned}
&\psi(\hat \lambda) - \psi(\lambda) \\&= \E \left[ \hat S_t(X,a) - S_t(X,a) \right] \\
&= \E\left[\sum_{u \le t}\hat S_t(X,a) \frac{S_{u-}(X,a)}{\hat S_{u-}(X,a)}\frac{\hat S_{u-}(X,a)}{\hat S_u(X,a)}\left(\lambda_u(X,a) - \hat \lambda_u(X,a)\right)\right]\text{~(By Lemma \ref{lem:rate_S})}\\
&= \E\left[\sum_{u \le t} \hat S_t(X,a) \frac{\hat S_{u-}(X,a)}{\hat S_u(X,a)}\left(\lambda_u(X,a) - \hat\lambda_u(X,a)\right)\right]
\\ &\quad + \E\left[\sum_{u \le t} \frac{\hat S_t(X,a)}{\hat S_u(X,a)} \left( S_{u-}(X,a) - \hat S_{u-}(X,a)\right)\left(\lambda_u(X,a) - \hat\lambda_u(X,a)\right)\right]
\end{aligned}
\end{equation}

We now argue that the second term is $o(n^{-1/2})$:
{\allowdisplaybreaks\begin{align*}
&\left|\E\left[\sum_{u \le t}\frac{\hat S_t(X,a)}{\hat S_u(X,a)} \left( S_{u-}(X,a) - \hat S_{u-}(X,a)\right)\left(\lambda_u(X,a) - \hat\lambda_u(X,a)\right)\right]\right|\\
&\le\left\|\sum_{u \le t}\frac{\hat S_t(X,a)}{\hat S_u(X,a)} \left( S_{u-}(X,a) - \hat S_{u-}(X,a)\right)\left(\lambda_u(X,a) - \hat\lambda_u(X,a)\right)\right\|_{L_2(P)}\\
&\le |\mc T|\max_{u\le t}\Big\|\left( S_{u-}(X,a) - \hat S_{u-}(X,a)\right)\left(\lambda_u(X,a) - \hat\lambda_u(X,a)\right)\Big\|_{L_2(P)}\\
&\le |\mc T|\max_{u\le t}\norm{S_{u-}(X,a) - \hat S_{u-}(X,a)}_{L_2(P)}\norm{\lambda_u(X,a) - \hat\lambda_u(X,a)}_{L_2(P)}
\end{align*}
We then use the bound for $S$ in Lemma \ref{lem:rate_S} to get:
\begin{align*}
&\left|\E\left[\sum_{u \le t}\frac{\hat S_t(X,a)}{\hat S_u(X,a)} \left( S_{u-}(X,a) - \hat S_{u-}(X,a)\right)\left(\lambda_u(X,a) - \hat\lambda_u(X,a)\right)\right]\right|\\
&\le |\mc T|^2 \max_{u \le t} \norm{\lambda_u(X,a) - \hat\lambda_u(X,a)}^2_{L_2(P)}\\
&= o_p(n^{-1/2}) \qquad \text{~(By Assumption \ref{ass:rate})}
\end{align*}}
\end{proof}
\begin{proof}[Proof of Lemma~\ref{lemma:negligible-imbalance}]

We will establish this result timestep-by-timestep. That is, we will show that for all $u$,
\begin{equation}
\label{eq:negligible-imbalance-claim}
\begin{aligned}
&\frac{1}{n}\sum_{i=1}^n  \hat \gamma_{iu}\{ Y_{iu} - \hat\lambda_u(X_i,A_i,G_i) \} - \E[\hat r_u(X,a) (\lambda_u -\hat\lambda_u)(X,a,1)] \\
&= \frac{1}{n}\sum_{i=1}^n \gamma_{u}(X_i,A_i,G_i) \{ Y_{iu} - \lambda(X_i,A_i,G_i) \} + o_p(n^{-1/2}).
\end{aligned}
\end{equation}

We will do this by relating it to the error of an augmented minimax linear estimator of $\psi(\lambda_u) = \E[h(X,A,G,\lambda_u)]$ for $h(X,A,G,\eta)=\hat r_u(X,a)\eta(X,a,1)$.
Lemma~\ref{thm:optimization-decomposition} characterizes our $u$-timestep weights $\hat\gamma_u = \hat\gamma_{u1} \ldots \hat \gamma_{un}$ as the weights in that estimator, i.e. the weights described in \citet[Theorem 1]{hirshberg2021augmented} specialized for this instance of $h$.  
\citet[Theorem 2]{hirshberg2021augmented} establishes a bound on the following quantity.
\begin{equation}
\label{eq:if-difference}
\begin{aligned}
&\frac{1}{n}\sum_{i=1}^n h(X_i,A_i,G_i, \hat\lambda_u) + \frac{1}{n}\sum_{i=1}^n  \hat \gamma_{iu}\{ Y_{iu} - \hat\lambda_u(X_i,A_i,G_i) \} - \frac{1}{n}\sum_{i=1}^n \left[ h(X_i,A_i,G_i, \lambda_u) + \tilde\gamma(X_i,A_i,G_i)\{ Y_{iu} - \lambda_u(X_i,A_i,G_i) \} \right] \\
&= \frac{1}{n}\sum_{i=1}^n h(X_i,A_i,G_i, \hat\lambda_u-\lambda_u) + \frac{1}{n}\sum_{i=1}^n  \hat \gamma_{iu} \{ Y_{iu} - \hat \lambda_u(X_i,A_i,G_i) \} - \frac{1}{n}\sum_{i=1}^n \tilde\gamma(X_i,A_i,G_i)\{ Y_{iu} - \lambda_u(X_i,A_i,G_i) \} 
\end{aligned}
\end{equation}
Here $\tilde\gamma$ is a regularized projection of $\gamma_u$. 
\begin{equation}
\label{eq:tilde-gamma}
\tilde\gamma = \argmin_{g}\norm{ g - \gamma_u}_{L_2(P_n)}^2 + \frac{\sigma^2}{n}\norm{g}^2
\end{equation}
 Suppose this quantity \eqref{eq:if-difference} is $o_p(n^{-1/2})$, so the claim \eqref{eq:negligible-imbalance-claim} is equivalent to a version in which we subtract this quantity from the expression on its left side. Here, via arithmetic, is this supposedly equivalent claim.
\begin{equation}
\label{eq:reduced-claim}
\begin{aligned}
&\frac{1}{n}\sum_{i=1}^n \tilde\gamma(X_i,A_i,G_i)\{ Y_{iu} - \lambda_u(X_i,A_i,G_i) \} + \frac{1}{n}\sum_{i=1}^n h(X_i,A_i,G_i, \lambda_u-\hat\lambda_u) - \E[ h(X_i,A_i,G_i, \lambda_u-\hat\lambda_u) ] \\
&=
\frac{1}{n}\sum_{i=1}^n \gamma_{u}(X_i,A_i,G_i) \{ Y_{iu} - \lambda_u(X_i,A_i,G_i) \} + o_p(n^{-1/2}).
\end{aligned}
\end{equation}
Thus, to conclude that the claim holds, it's sufficient to show that three things are $o_p(n^{-1/2})$: \textbf{Term 1}, \eqref{eq:if-difference}; \textbf{Term 2}, the difference of the first terms on each side of \eqref{eq:reduced-claim};
and \textbf{Term 3}, the second term on the left side of \eqref{eq:reduced-claim}. 

\paragraph{Term 3.}
 The second term on the left side of \eqref{eq:reduced-claim} is a sum of iid mean zero terms:
\begin{equation}
\frac1n \sum_{i=1}^n Z_i-\E Z_i = o_p(n^{-1/2}) \quad \text{ for } \quad Z_i = \hat r_u(X_i,a) (\lambda_u -\hat\lambda_u)(X_i,a,1). 
\end{equation}
Thus, its variance is $1/n$ times the variance of an individual term, and to show that it is $o_p(n^{-1/2})$, it suffices to show that the variance of an individual term tends to zero as $n \to \infty$. Given our assumptions, this follows from H\"older's inequality. 
\begin{equation}
\begin{aligned}
\Var[Z_i] \le \E[Z_i^2] 
&= \E[ \hat r_u(X_i,a)^2 (\lambda_u -\hat\lambda_u)(X_i,a,1)^2] \\ 
&\le \norm{r_u(\cdot,a)^2}_{\infty}\norm{(\lambda_u - \hat\lambda_u)(\cdot,a,1)}_{L_1(P)} \\
&= \norm{r_u(\cdot,a)}_\infty^2 \norm{(\lambda_u-\hat\lambda_u)(\cdot,a,1)}_{L_2(P)}.
\end{aligned}
\end{equation}
The first factor here is less than one, as $-\hat r_u$ is the product of a probability, $\hat S_{u-}$, and a probability ratio $\hat S_t /\hat S_u$ that is also less than one because $u \le t$ and our estimated survival probability is decreasing in time. And the second is $o(1)$ by assumption.

\paragraph{Term 2.}
The difference of the first terms on each side of \eqref{eq:reduced-claim} is also
a sum of independent mean zero terms:
\begin{equation}
\frac{1}{n}\sum_{i=1}^n \{\tilde\gamma(X_i,A_i,G_i)-\gamma_u(X_i,A_i,G_i)\}\{ Y_{iu} - \lambda_u(X_i,A_i,G_i) \}.
\end{equation}
In particular, these terms are mean zero \emph{conditional on} $\mathcal{F} = \{ X_i,A_i,G_i : i = 1 \ldots n\}$. 
We will show that its conditional variance is $o_p(1/n)$. This is $1/n$ times the average conditional term variance, so as above it suffices to show this goes to zero. And again, we do this via H\"older's inequality. Because $Y_{iu}$ is an indicator and $\lambda_u$ is a probability, the magnitude of their difference is bounded by $1$. Thus,
\begin{equation}
\begin{aligned}
&\frac{1}{n}\sum_{i=1}^n \E[ \{\tilde\gamma(X_i,A_i,G_i)-\gamma_u(X_i,A_i,G_i)\}^2\{ Y_{iu} - \lambda_u(X_i,A_i,G_i) \}^2 \mid \mathcal{F}_u] \\
&\le \frac{1}{n}\sum_{i=1}^n \E[ \{\tilde\gamma(X_i,A_i,G_i)-\gamma_u(X_i,A_i,G_i)\}^2 \mid \mathcal{F}_u ] \\
&=  \frac{1}{n}\sum_{i=1}^n \{\tilde\gamma(X_i,A_i,G_i)-\gamma_u(X_i,A_i,G_i)\}^2  \quad \text{because this is $\mathcal{F}_u$-measurable} \\
&= \norm{\tilde\gamma-\gamma}_{L_2(P_n)}^2 
\end{aligned}
\end{equation}
This is bounded by the value of the optimization problem \eqref{eq:tilde-gamma} that $\tilde\gamma$ solves, 
so it suffices to show that this value tends to zero. That is, that there exists a sequence $g_n$ converging to $\gamma_u$
in $\norm{\cdot}_{L_2(P_n)}$ with $\norm{g_n} = o(\sqrt{n})$. Suppose, for a moment, that $\gamma_u$ did not vary with $n$.
Then because $\gamma_u$ is an element of what is called the \emph{tangent space} in \citet{hirshberg2021augmented},
i.e. the closure the span of the unit ball of $\norm{\cdot}$ or equivalently the set of limits of sequences $g_1,g_2,\ldots$
with $\norm{g_n} < \infty$ for all $n$, there must be a sequence $g_n$ with $\norm{g_n-\gamma_u}_{L_2(P)} \to 0$
and $\norm{g_n} = o(\sqrt{n})$. We can take any sequence tending toward $\gamma_u$ and choose a subsequence with 
$\norm{g_n}=o(\sqrt{n})$. It follows by the law of the law of large numbers that 
$\norm{g_n}_{L_2(P_n)} \to \norm{g_n}_{L_2(P)}$, so the value of \eqref{eq:tilde-gamma}
does tend to zero. This argument extends naturally to the case that $\gamma_u$ varies with $n$ but 
converges in mean square to some limit $\gamma_u^\star$. By the same argument as before, there exists a sequence $g_n$ 
with $\norm{g_n} = o(\sqrt{n})$  converges in mean-square to $\gamma_u^\star$ and the triangle inequality 
implies that it converges to $\gamma_u$.

%What remains is to show that $\gamma_u$ does have a mean-square limit. We refer to the explicit characterization in \eqref{eq:derivative-2}.

\paragraph{Term 1.}
\citet[Theorem 2]{hirshberg2021augmented} establishes a bound on \eqref{eq:if-difference}
that involves several measures of the complexity of the unit ball $\F$ of the norm $\norm{\cdot}$.
We will follow its notation in letting $\Fr$ be the neighborhood $\{ f \in \F : \norm{f}_{L_2(P)} \le r \}$
and $h_{\gamma_u}(\cdot, \Fr)$ and refer to the set $\{h(\cdot, f) - \gamma_u f : f \in \Fr\}$. The first is $R_n(\Fr)$, the Rademacher complexity of the neighborhood $\Fr$. And the second is $h_\gamma(\cdot, \F_r)$. Using the explicit characterization of $\gamma_u$ \eqref{eq:derivative-2} we can write that as follows in terms of a Rademacher sequence $\varepsilon_1\ldots\varepsilon_n$.
\begin{equation}
 R_n(h_\gamma(\cdot, \F_r)) = \E \sup_{f \in \Fr} \frac1n\sum_{i=1}^n \varepsilon_i \hat r(X_i, a)\left\{ 1 - \omega(A_i, G_i) \right\} f(X_i,a,1) \quad \text{for} \quad \omega(A_i,G_i) = \frac{1(A_i=a, G_i^u=1)}{P(A=a, G^u=1)}.
\end{equation}
Because $\hat r(X_i,a)$ is, as discussed above bounded by 1, the contraction theorem for Rademacher processes implies that this is bounded by a version without it, i.e.,
\begin{equation}
\label{eq:contracted-out-r}
 R_n(h_\gamma(\cdot, \F_r)) \le \E \sup_{f \in \Fr} \frac1n\sum_{i=1}^n \varepsilon_i \left\{ 1 - \omega(A_i, G_i) \right\} f(X_i,a,1)
\end{equation}
At this point, we are in a position where the proof of \citet[Theorem 1]{hirshberg2021augmented}
applies without modification. In that proof, $R_n(\F_r)$ and \eqref{eq:contracted-out-r} are bounded using the same assumptions we make and those bounds, in conjunction with assumed bounds on $\hat \lambda-\lambda$,
establish that \citet[Theorem 2]{hirshberg2021augmented} implies an $o_p(n^{-1/2})$ bound on \eqref{eq:if-difference}. In particular, the assumptions used are that $\lVert\hat\lambda_u-\lambda_u\rVert = O_p(1)$
(Assumption~\ref{ass:strongnorm-bounded}); $\lVert\hat\lambda_u - \lambda_u\rVert_{L_2(P_n)} = o_p(1)$, 
follows from our assumption that $\hat\lambda$ is $L_2(P)$ consistent, because it is cross-fit, via the law of large numbers\footnote{Or, without cross-fitting, as a consequence of $\F$ being Donsker}; that $\F$ and $\omega \F$ are Donsker (Assumption~\ref{ass:donskerity}); and that $\omega$ is square-integrable (Assumption~\ref{ass:overlap}).
\end{proof}

\begin{proof}[Proof of Lemma~\ref{lemma:eif-continuity}]

{\allowdisplaybreaks\begin{align*} 
&\phi(\hat\lambda)(O_i) - \phi(\lambda)(O_i) 
\\&=\left( \hat{S}_t(X_i,a) - E\left[\hat S_t(X,a)\right]\right) \\
&\qquad+ \sum_{u\le t}\Big[\hat{r}_u(X_i,a)  \one(A_i=a, G_{iu}=1)\omega_u(X_i,a)\left(\one(\td T_i=u,E_i=1) - \lambda_u(X_i,a)\right)\Big] 
\\& \qquad- \left(S_t(X_i,a) - E\left[ S_t(X,a)\right]\right) 
\\& \qquad - \sum_{u\le t}\Big[r_u(X_i,a)  \one(A_i=a, G_{iu}=1)\omega_u(X_i,a)\left(\one(\td T_i=u,E_i=1) - \lambda_u(X_i,a)\right)\Big] 
\\&= (\hat{S}_t(X_i,a) - S_t(X_i,a)) 
\\&\qquad - \left(E\left[\hat S_t(X,a)\right] - E\left[S_t(X,a)\right]\right) 
\\& \qquad + \sum_{u\le t} \Big[(\hat{r}_u(X_i,a) - r_u(X_i,a))\one(A_i=a, G_{iu} = 1)\omega_u(X_i,a)\left(\one(\td T_i=u,E_i=1) - \lambda_u(X_i,a)\right)\Big]
\end{align*}}
For simplicity, we drop the $(X_i,a)$ when writing functions $\hat{S},S,\hat{r}_u,r_u,\hat{\lambda}_u$, and $\lambda_u$.

We first consider the term $\hat{S}_t - S_t$. From Lemma \ref{lem:rate_S}, we can write
\begin{align*}
\hat S_t - S_t &= \sum_{u \le t}\hat S_t \frac{S_{u-}}{\hat S_{u-}}\frac{\hat S_{u-}}{\hat S_u}\left(\lambda_u - \hat \lambda_u\right) 
\\&= \sum_{u \le t}\hat S_t\frac{ S_{u-}}{\hat S_u}\left(\lambda_u - \hat \lambda_u\right) 
\end{align*}

It is obvious that $0 \le \frac{\hat S_tS_{u-}}{\hat S_u} \le \frac{\hat S_t}{\hat S_u} \le 1$. Applying Holder's inequality and the given condition $\norm{\hat \lambda-\lambda}_{L_2(P)} \to 0$, we can imply that $\hat S \rightarrow S$.

Since $\hat r_u$ and $r_u$ are continuous functions of $\hat S$ and $S$, respectively, and $\hat S\rightarrow S$ as shown above, $\hat r_u \rightarrow r_u$. Then for the last term of the $\phi(\hat\lambda)(O_i) - \phi(\lambda)(O_i)$, we can continue applying Holder's inequality and further conclude that $\E \{ \phi(\hat\lambda)-\phi(\lambda)\}^2 \to 0$.
\end{proof}
\newpage
\subsection{A Sketch of Theorem \ref{thm:efficiency}}
\label{sec:sketch}
Our proof of Theorem~\ref{thm:efficiency} uses results from \citet{hirshberg2021augmented} to do some heavy lifting. For the sake of self-containedness,
we will sketch the main ideas of the argument we'd use to prove it from scratch. We use a more detailed decomposition of the error $\hat\psi - \psi(\lambda)$ as follows:
{\allowdisplaybreaks\begin{equation}
\begin{aligned} 
&\hat \psi - \psi(\lambda)
\\&= \psi_n(\hat\lambda) + \frac{1}{n}\sum_{i=1}^n  \sum_{u\le t} \hat\gamma_{iu} \{ Y_{iu} - \hat\lambda_u(Z_i)\} - \psi(\lambda) 
\\
&= \left\{\psi_n(\hat\lambda) - \psi(\hat\lambda)\right\} 
+ \frac{1}{n}\sum_{i=1}^n \sum_{u \le t}\hat \gamma_{iu}(Y_{iu} - \lambda(Z_i))
\\
&\quad+ \frac{1}{n}\sum_{i=1}^n  \sum_{u\le t} \hat\gamma_{iu} ( \lambda_{iu} - \hat\lambda_u(Z_i) ) 
\\&\quad- \frac{1}{n}\sum_{i=1}^n\sum_{u \le t} \hat r_u(X_i, a)(\lambda(Z_i) - \hat \lambda(Z_i))\\
&\quad+ \frac{1}{n}\sum_{i=1}^n\sum_{u\le t}\hat r_u(X_i, a)(\lambda(Z_i) - \hat\lambda(Z_i))
\\&\quad- d\psi(\hat\lambda)(\lambda - \hat\lambda) \\
&\quad + \left\{ \psi(\hat\lambda) + d\psi(\hat\lambda)(\lambda - \hat\lambda) - \psi(\lambda) \right\}.
\end{aligned}
\end{equation}}
We sketch the analysis of each of the 4 terms above:
\begin{enumerate}
\item The first term converges to the influence function of the estimator because $\hat \lambda$ and $\hat \gamma$ are convergent. That the latter converges to the population Riesz representer is a consequence of the analysis of the imbalance in the 2nd term below.
\item The 2nd term is the imbalance term motivated by the approximation \ref{eq:sample-balance}, and our optimization problem \ref{eq:weight-optimization} directly controls it. This is exactly where we borrow the covariate-balancing analysis of \cite{hirshberg2021augmented} to our problem, noting that they have similar structure.
\item The 3rd term is the difference of the sample-average derivative and its expectation, can be shown to be $o(n^{-1/2})$ because each term of the mean $(\sum_{u\le t}\hat r_u(X_i, a)(\lambda(Z_i) - \hat \lambda(Z_i)) - d\psi(\hat\lambda)(\lambda - \hat \lambda))$ has mean 0 and variance $o(1)$ as consequence of the convergence of $\hat \lambda$.
\item The 4th term is the 2nd-order remainder as before and is $o(n^{-1/2})$.
\end{enumerate}

Overall, we see again that $\hat\psi - \psi(\lambda) = \sum_{i=1}^n\phi(O_i) + o(n^{1/2})$.

\end{document}